%


\documentclass[12pt, draftclsnofoot, onecolumn]{IEEEtran}

\usepackage[dvips]{color}
\usepackage{comment}
\usepackage{todonotes}
\usepackage{epsf}
\usepackage{epsfig}
\usepackage{times}
\usepackage{epsfig}
\usepackage{graphicx}
\usepackage{bbold}
\usepackage{mathtools}
\usepackage{mathrsfs}
\usepackage{amssymb}
\usepackage{pdfpages}
\usepackage{epstopdf}
\usepackage{dsfont}
\usepackage{lettrine} 
\usepackage{amsmath,epsfig,amssymb,algorithm,algpseudocode,amsthm,cite,url}

\usepackage[justification=centering]{caption}
\usepackage{subcaption}
\allowdisplaybreaks
\usepackage{csquotes}
\usepackage{verbatim}
\usepackage[english]{babel}
\usepackage{amsmath,amssymb}

\usepackage{verbatim}

\newtheorem{theorem}{\bf Theorem}

\newtheorem{proposition}{\bf Proposition}

\newtheorem{remark}{Remark}

\usepackage{setspace}	
\setlength\abovedisplayskip{3pt plus 2pt minus 2pt} 	
\setlength\belowdisplayskip{3pt plus 2pt minus 2pt}	
\setlength\textfloatsep{10pt plus 2pt minus 2pt}		
%


\begin{document}
%
%
\title{\setstretch{1.3}Unmanned Aerial Vehicle with Underlaid Device-to-Device Communications: Performance and Tradeoffs}
\author{\normalsize{Mohammad Mozaffari,~\IEEEmembership{\normalsize Student Member,~IEEE},
{Walid Saad,~\IEEEmembership{\normalsize Senior Member,~IEEE}},
{Mehdi Bennis,~\IEEEmembership{\normalsize Senior Member,~IEEE}},
 and {M\'{e}rouane Debbah,~\IEEEmembership{\normalsize Fellow,~IEEE}}}
\thanks{M.~Mozaffari and W.~Saad are with Wireless@VT, Department of ECE, Virgina Tech, Blacksburg, VA, USA. Emails: \{mmozaff,walids\}@vt.deu. M.~Bennis is with CWC - Centre for Wireless Communications, Oulu, Finland, Email: {bennis@ee.oulu.fi}. M.~Debbah is with Mathematical and Algorithmic Sciences Lab, Huawei France R \& D, Paris, France, Email:{merouane.debbah@huawei.com}.}
}
\maketitle\vspace{-0.8cm}
\begin{abstract}
In this paper, the deployment of an unmanned aerial vehicle (UAV) as a flying base station used to provide on the fly wireless communications to a given geographical area is analyzed. In particular, the co-existence between the UAV, that is transmitting data in the downlink, and an underlaid device-to-device (D2D) communication network is considered. For this model, a tractable analytical framework for the coverage and rate analysis is derived. Two scenarios are considered: a static UAV and a mobile UAV. In the first scenario, the average coverage probability and the system sum-rate for the users in the area are derived as a function of the UAV altitude and the number of D2D users. In the second scenario, using the disk covering problem, the minimum number of stop points that the UAV needs to visit in order to completely cover the area is computed. \textcolor{black}{Furthermore, considering multiple retransmissions for the UAV and D2D users, the overall outage probability of the D2D users is derived.} Simulation and analytical results show that, depending on the density of D2D users, optimal values for the UAV altitude exist for which the system sum-rate and the coverage probability are maximized. Moreover, our results also show that, by enabling the UAV to intelligently move over the target area, the total required transmit power of UAV while covering the entire area, is minimized. Finally,  in order to provide a full coverage for the area of interest, the tradeoff between the coverage and delay, in terms of the number of stop points, is discussed.
\end{abstract} \vspace{-0.4cm}

\section{Introduction}\vspace{-0.1cm}
The use of unmanned aerial vehicles (UAVs) as flying base stations that can boost the capacity and coverage of existing wireless networks has recently attracted significant attention \cite{Bucaille} and \cite{zhan}. One key feature of a UAV that can potentially lead to the coverage and rate enhancement is having line-of-sight (LoS) connections towards the users. Moreover, owing to their agility and mobility, UAVs can be quickly and efficiently deployed to support cellular networks and enhance their quality-of-service (QoS). 
On the one hand, UAV-based aerial base stations can be deployed to enhance the wireless capacity and coverage at temporary events or hotspots such as sport stadiums and outdoor events. On the other hand, they can be used in public safety scenarios to support disaster relief activities and to enable communications when conventional terrestrial networks are damaged \cite{Bucaille}. Another important application of UAVs is in the Internet of Things (IoT) in which the devices have small transmit power and may not be able to communicate over a long range. In this case, a UAV can provide a means to collect the IoT data from one device and transmit it to the intended receiver \cite{lien} and \cite{dhillon}. Last but not least, in regions or countries in which building a complete cellular infrastructure is very expensive, deploying UAVs is highly beneficial as it removes the need for towers and cables. In order to reap the benefits of UAV deployments for communication purposes, one must address a number of technical challenges that include performance analysis, channel modeling, optimal deployment, resource management, and energy efficiency, among others \cite{HouraniModeling, FengModelling, FengPath, Holis,HouraniOptimal,Mozaffari,kosmerl,Daniel,Rohde,Han,Jiang, Mozaffari2}.

The most significant existing body of work on UAV communications focuses on air-to-ground channel modeling \cite{HouraniModeling, FengModelling, FengPath, Holis}. For instance, in \cite{HouraniModeling} and \cite{FengModelling}, the probability of line of sight (LoS) for air-to-ground communication as a function of the elevation angle and average height of buildings in a dense urban area was derived. The air-to-ground path loss model has been further studied in \cite{FengPath} and \cite{Holis}. As discussed in \cite{Holis}, due to path loss and shadowing, the characteristics of the air-to-ground channel are shown to depend on the height of the aerial base stations. 

To address the UAV deployment challenge, the authors in \cite{HouraniOptimal} derived the optimal altitude enabling a single, static UAV to achieve a maximum coverage radius. However, in this work, the authors simply defined a deterministic coverage by comparing the path loss with a specified threshold and did not consider the coverage probability. The work in \cite{Mozaffari} extends the results of \cite{HouraniOptimal} to the case of two UAVs while considering interference between the UAVs. In \cite{kosmerl}, the authors studied the optimal placement of UAVs for public safety communications in order to enhance the coverage performance. However, the results presented in \cite{kosmerl} are based on simulations and there is no significant analytical analysis. Moreover, the use of UAVs for supplementing existing cellular infrastructure was discussed in \cite{Daniel} which provides a general view of practical considerations for integrating UAVs with cellular networks. The work in \cite{Rohde} considered the use of UAVs to compensate for the cell overload and outage in cellular networks. However, \cite{Daniel} does not provide any analysis on the coverage performance of UAVs and their optimal deployment methods. In \cite{Han}, the authors investigated how to optimally move UAVs for improving connectivity of ad hoc networks. However, \cite{Han} only focused on an ad-hoc network and assumed that the UAV have complete information about the location of nodes. In \cite{Jiang}, considering static ground users, the optimal trajectory and heading of UAVs equipped with multiple antennas for ground to air uplink scenario was derived. The work in \cite{Mozaffari2} proposed a power efficient deployment and cell association for multiple UAVs in downlink transmissions.   

For scenarios in which there is limited or no infrastructure support, beyond the use of UAVs, there has been considerable recent works that study the use of direct device-to-device (D2D) communications between wireless users over the licensed spectrum \cite{yaacoub}. Such D2D communications has been shown improve coverage and capacity of existing wireless networks, such as cellular systems. In particular, in hotspot areas or public safety scenarios, D2D will allow users to communicate directly with one another without significant infrastructure. D2D communications are typically deployed using underlaid transmission links which reuse existing licensed spectrum resources \cite{doppler}. Therefore, deploying a UAV over a spectrum band that must be shared with an underlaid D2D network will introduce important interference management challenges. In the literature, there are some studies on the coexistence of the underlaid D2D and cellular communications with a single base station \cite{lee}. Furthermore,  the authors in \cite{shalmashi}  exploited the interplay between the massive MIMO and underlaid D2D communications for a single cell case. The authors in \cite{lin} extended the previous work on the D2D/massive MIMO coexistence  to the multi-cell scenario. However, none of theses prior works studied the coexistence of UAVs and underlaid D2D communications. In particular, a comprehensive analytical analysis to evaluate this coexistence in terms of different performance metrics, such as coverage and rate, is lacking in the current state-of-the-art \cite{HouraniOptimal,Han,lee,shalmashi,lin}.

Compared to the previous studies on the coexistence of D2D and cellular networks such as \cite{shalmashi} and \cite{lin}, the presence of an aerial UAV base station along with D2D links introduces new challenges. First, the channel modeling between the UAV and ground users will no longer be a classical fading channel, instead, it will be based on probabilistic LoS and NLoS links  \cite{HouraniModeling, FengModelling}, while the channel between a base station and the users will still follow a Rayleigh fading model. Second, unlike conventional, fixed base stations, the height of a UAVs is adjustable and this impacts the channel characteristics and the coverage performance. Third, the potential mobility of a UAV introduces new dimensions to the problem and the impact of such mobility on D2D and network performance must be analyzed. The prior studies on UAVs such as \cite{HouraniOptimal,HouraniModeling, FengModelling, FengPath, Holis,Mozaffari,kosmerl,Daniel,Rohde,Han} have not addressed the third challenge. More specifically, the interplay between UAVs and D2D communications and the existing challenges and tradeoffs have not been investigated in these literature. To our best knowledge, this paper will  provide \emph{the first comprehensive fundamental analysis} on the performance of UAV communication in the presence of underlaid D2D links.       
 
\textcolor{black}{The main contribution of this paper is to analyze the coverage and rate performance of UAV-based wireless communication in the presence of underlaid D2D communication links. In particular, we consider a network in which a single UAV must provide downlink transmission support to a number of users within a given area. In this area, a subset of the devices is also engaged in D2D transmissions that  operate in an underlay fashion over the UAV's transmission. We consider two types of users, namely downlink users (DUs) which receive data from the UAV, and D2D users which communicate directly with one another. Here, the UAV must communicate with the DUs while taking into account the potential interference stemming from the underlaid D2D transmissions. For this network, we analyze two key cases: static UAV and mobile UAV. Using tools from stochastic geometry, for both scenarios, we derive the average downlink coverage probabilities for DUs and D2D users and we analyze the impact of the UAV altitude and density of the D2D users on the overall performance. For the static case, we find the optimal values for the UAV altitude which leads to a maximum coverage probability for DUs. In addition, considering both DUs and D2D users, an optimal altitude which maximizes the system sum-rate is computed. Our results demonstrate that the optimal UAV altitude decreases as the density of D2D users increases. The results show that a maximum system sum-rate can be achieved if the UAV altitude is appropriately adjusted based on the D2D users' density. Furthermore, for a given UAV altitude, we show that  an optimal value for the number of D2D users that maximizes the system sum-rate exists.} 

\textcolor{black}{For the mobile UAV case, we assume that the UAV can travel over the area while stopping at some given locations in order to serve the downlink users. Using the disk covering problem, we find a minimum number of stop points that the UAV needs to to completely cover the area. This can be interpreted as the fastest way to cover the whole area with a minimum required transmit power.  In addition, we analyze the tradeoff between the number of stop points, which is considered as delay here, and the coverage probability for the downlink users.  Moreover,  considering retransmissions at different time instances, we derive the overall outage probability for the D2D communications. We show that, in order to enhance the coverage for DUs, the UAV should stop in more locations over the target area which can, in turn, lead an increased delay for DUs and higher outage probability for D2D users. For example, our results show that for a given density of D2D users, to increase the DU coverage probability from 0.4 to 0.7, the number of stop points should be increased from 5 to 23. Furthermore, the number of stop points is shown to significantly depend on the number of D2D users. For instance, if the average number of D2D users in the area increases from $50$ to $100$, in order to maintain the DUs' coverage requirement, the number of stop points should be increased from 20 to 55.}
 
The rest of this paper is organized as follows. Section II presents the system model and describes the air-to-ground channel model. In Section III, coverage probabilities for DUs and D2D users are provided for a single static UAV. Section IV presents the performance evaluation for one mobile UAV which is used to provide full coverage for the target area. Section V presents the simulation results while Section VI draws some conclusions.

\section{System Model}\label{sec:sysmodel}
\textcolor{black}{Consider a circular area with a radius ${R_c}$ in which a number of wireless users are deployed. In this area, a UAV (at low altitude platform) is deployed to act as a flying base station and serve a subset of those users.} In this network, the users are divided into two groups: downlink users located uniformly in the cell with density $\lambda_{du}$ (number of users per $\rm{m}^{2})$, \textcolor{black}{ and D2D users whose distribution follows homogeneous Poisson point processes (PPP)}  ${\Phi _{\rm{B}}}$  \cite{haenggi} with density of  ${\lambda _{d}}$ (number of D2D pairs per $\rm{m}^{2})$. Note that, the average number of users in a given area is equal to the density of the users multiplied by the size of the area. Here, we focus on the downlink scenario for the UAV and we assume that the D2D users communicate in an underlay fashion. Furthermore, we assume that a D2D receiver connects to its corresponding D2D transmitter pair located at a fixed distance away from it in an isotropic direction \cite{lee}. Therefore, the received signals at the D2D receiver include the desired signal from the D2D transmitter pair and interference from the UAV and other D2D transmitters. A downlink user, on the other hand, receives the desired signal from the UAV but it also experiences interference from all the D2D transmitters.  \textcolor{black}{It should be noted that, in our model, the UAV provides service for downlink users (DUs) located inside a given, finite area with radius $R_c$. Nonetheless, we assume that the D2D users are spatiality distributed according to a PPP over an infinite area. In other words, each user receives interference from an infinite number of D2D transmitters. This is a typical assumption in PPP analysis which ensures that, the average amount of received interference from D2D transmitters does not depend on the location of the users \cite{shalmashi, martin}, and \cite{baccelli}.} 

\begin{figure}[t!]
  \begin{center}
    \includegraphics[width=0.7\textwidth]{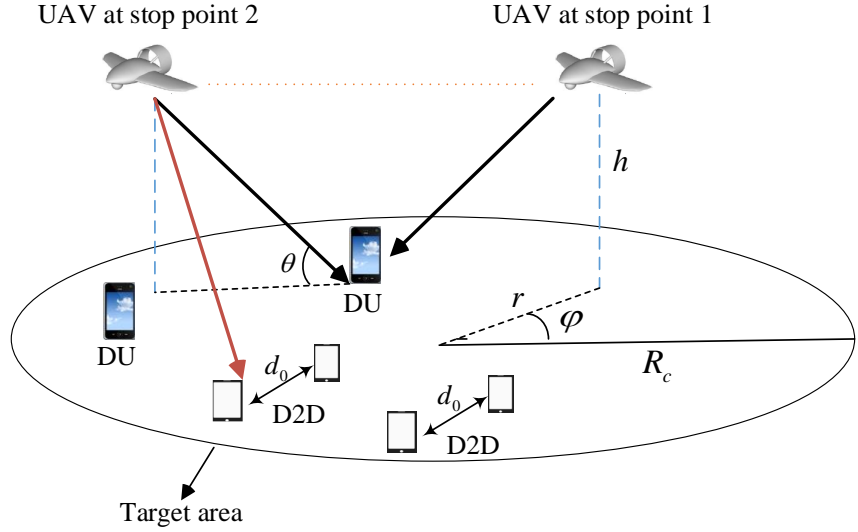}
    \caption{\label{fig: Fig1} Network model including a UAV, downlink users and D2D.}
  \end{center}
\end{figure}

The signal to interference plus noise ratio (SINR) expression for a D2D receiver is:
\begin{equation}
{\gamma _d} = \frac{{{P_{r,d}}}}{{I_d^c + {I_u} + N}},
 \end{equation}
 where ${P_{r,d}}$ is the received signal power from the D2D transmitter,  $I_d^c$ is the total interference from other D2D users, ${I_u}$  is the interference from the UAV, and $N$ is the noise power. Moreover, we have:
 \begin{gather}
{P_{r,d}} = {P_d}{d_{0}^{ - {\alpha _d}}} g_{0},\\
I_d^c = \sum\limits_{i \ne 0} {{P_d}{d_{i}}^{ - {\alpha _d}}{g_{i}}}, \\
{I_d} = \sum\limits_i {{P_d}{d_{i}}^{ - {\alpha _d}}{g_{i}}}, 
 \end{gather}
where the index $i=0$ is used for the selected D2D transmitter/receiver pair, $g_{0}$ and $g_{i}$ are, respectively, the channel gains between a D2D receiver and its corresponding D2D transmitter, and the ${i^{th}}$ interfering D2D transmitters. For the D2D transmission, we assume a Rayleigh fading channel model \cite{lee,lin} and \cite{Afshang}. ${{P_d}}$ is the D2D transmit power which is assumed to be fixed and equal for all the users, ${{d_{i}}}$ is the distance between a D2D receiver and the ${i^{th}}$ D2D transmitter, $d_{0}$ is the fixed distance between the D2D receiver and transmitter of the selected D2D pair, and  ${\alpha _d}$  is the path loss exponent between D2D users. Note that the received signal powers as well as the noise power are normalized by a path loss coefficient. 
 
The SINR expression for a DU user that connects to the UAV is:

\begin{equation}
{\gamma _u} = \frac{{{P_{r,u}}}}{{{I_d} + N}},
 \end{equation}
where ${P_{r,u}}$ is the received signal power from  the  UAV and ${I_d}$ is the total interference power from D2D transmitters. 
\textcolor{black}{
Finally, the SINR-based coverage probability for the downlink users and the D2D users is given by:
\begin{align}
{P_{{\mathop{\rm cov}} ,{{du}}}}(\beta ) &={\mathds {P}}\left[ {\gamma _u} \ge \beta \right],\\
{P_{{\mathop{\rm cov}} ,{{d}}}}(\beta ) &={\mathds {P}}\left[ {\gamma _d} \ge \beta \right],
\end{align}
where $\gamma _u$ and $\gamma _d$ are, respectively, the SINR values at the location of the downlink users and the D2D users, and $\beta$ is the SINR threshold.
}
 
\subsection{Air-to-ground channel model}
\textcolor{black}{
As discussed in \cite{HouraniModeling} and \cite{HouraniOptimal}, the ground receiver receives three groups of signals including LoS, strong reflected non-line-of-sight (NLoS) signals, and multiple reflected components which cause multipath
fading. These groups can be considered separately with different probabilities of occurrence as shown in \cite{Holis} and \cite{HouraniModeling}. Typically, it is assumed that the received signal is categorized in only one of those groups \cite{HouraniOptimal}. Each group has a specific probability of occurrence which is a function of environment, density and height of buildings, and elevation angle. Note that the probability of having the multipath fading is significantly lower than the LoS and NLoS groups \cite{HouraniOptimal}. Therefore, the impact of small scale fading can be neglected in this case \cite{HouraniModeling}.} One common approach to modeling air-to-ground propagation channel is to consider LoS and NLoS components along with their occurrence probabilities separately as shown in \cite{HouraniModeling} and \cite{Holis}. Note that for NLoS connections due to the shadowing effect and the reflection of signals from obstacles, path loss is higher than in LoS. Hence, in addition to the free space propagation loss, different excessive path loss values are assigned to LoS and NLoS links.
Depending on the LoS or NLoS connection between the user and UAV, the received signal power at the user location is given by \cite{HouraniOptimal}: \vspace{-0.3cm}

\begin{equation}
{P_{r,u}} = \left\{ \begin{array}{l}
{P_u}{\left| {{X_{u}}} \right|^{ - {\alpha _u}}}{\rm{{\rm \hspace*{1.7cm}{LoS\hspace*{.2cm} connection}}}},\\
\eta {P_u}{\left| {{X_{u}}} \right|^{ - {\alpha _u}}}{\rm \hspace*{1.3cm} {   NLoS \hspace*{.2cm}connection}},
\end{array} \right.
\end{equation}
where  ${P_u}$  is the UAV transmit power, $\left| {{X_{u}}} \right|$  is the distance between a generic user and the UAV, ${\alpha _u}$  is the path loss exponent over the user-UAV link, and  $\eta $ is an additional attenuation factor due to the NLoS connection. Here, the probability of LoS connection depends on the environment, density and height of buildings, the location of the user and the UAV, and the elevation angle between the user and the UAV. The LoS probability can be expressed as follows  \cite{HouraniOptimal}:
\begin{equation}
{P_{{\rm{LoS}}}} = \frac{1}{{1 + C\exp ( - B\left[ {\theta  - C} \right])}},
\end{equation}
where $C$  and $B$  are constant values which depend on the environment (rural, urban, dense urban, or others) and $\theta$  is the elevation angle.  Clearly, ${\theta} = \frac{{180}}{\pi } \times {\sin ^{ - 1}}\left( {{\textstyle{{{h}} \over {\left| {{X_{u}}} \right|}}}} \right)$, $\left| {{X_u}} \right| = \sqrt {{h^2} + {r^2}}$ and also, probability of NLoS is ${{P}_{{\text{NLoS}}}} = 1 - {{P}_{{\text{LoS}}}}$.\\
As observed from (9), the LoS probability increases as the elevation angle between the user and UAV increases. 

\textcolor{black}{Given this model, we will consider two scenarios: \emph {a static UAV} and \emph {a mobile UAV}. For each scenario, we will derive the coverage probabilities and average rate for DUs and D2D users. Once those metrics are derived, considering the D2D users density, we obtain optimal values for the UAV altitude that maximize the coverage probability and average rate.}
 
\section{Network with a Static UAV}\label{sec:da}\vspace{-0.1cm}
\textcolor{black}{	
In this section, we evaluate the coverage performance of the scenario in which one UAV located at the altitude of $h$ in the center of the area to serve the downlink users in the presence of underlaid D2D communications. It can be shown that, for a uniform distribution of users over the given area, palcing the UAV in the center of the area can maximize the coverage probability of the downlink users.\vspace{-0.2cm}
\subsection{Coverage probability for D2D users}
Consider a D2D receiver located at $(r,\varphi )$, where $r$ and $\varphi$ are the radius and angle in a polar coordinate system assuming that the UAV is located at the center of the area of interest. The distance between the D2D transmitter and its corresponding  receiver is fixed and it is denoted by $d_0$. 
 In this case, for underlaid D2D communication, the coverage probability for the D2D users  can be derived as follows:}
\textcolor{black}{	
\begin{theorem}\label{T1}
The coverage probability for a D2D receiver, at the location $(r,\phi)$, connecting to its D2D transmitter located at a distance $d_0$ away from it, is given by: \\
\begin{align} \label{eq:Pd}
{P_{{\mathop{\rm cov}} ,d}}(r,\varphi ,\beta ) &= \exp \left( {\frac{{ - 2{\pi ^2}{\lambda _d}{\beta ^{2/{\alpha _d}}}{d_0^2}}}{{{\alpha _d}\sin (2\pi /{\alpha _d})}} - \frac{{\beta {d_0^{{\alpha _d}}}N}}{{{P_d}}}} \right)\nonumber\\
& \times \left( {{P_{{\rm{LoS}}}}\exp \left( {\frac{{ - \beta {d_0^{{\alpha _d}}}{P_u}{|X_u|^{ - {\alpha _u}}}}}{{{P_d}}}} \right) + {P_{{\rm{NLoS}}}}\exp \left( {\frac{{ - \beta {d_0^{{\alpha _d}}}\eta {P_u}{|X_u|^{ - {\alpha _u}}}}}{{{P_d}}}} \right)} \right),
\end{align}
where $\left| {{X_u}} \right| = \sqrt {{h^2} + {r^2}}$.\vspace{-0.3cm}
\end{theorem}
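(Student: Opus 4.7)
The plan is to compute $\mathbb{P}[\gamma_d \geq \beta]$ directly by exploiting the exponential distribution of the desired-link fading and the independence of the three interference/noise contributions. Since the desired D2D channel gain $g_0$ is Rayleigh (hence $g_0 \sim \mathrm{Exp}(1)$), the coverage condition $P_d d_0^{-\alpha_d} g_0 \geq \beta (I_d^c + I_u + N)$ rewrites as $g_0 \geq s (I_d^c + I_u + N)$ with $s := \beta d_0^{\alpha_d}/P_d$. Conditioning on everything else and integrating out $g_0$ gives
\begin{equation*}
P_{\mathrm{cov},d} = \exp\!\left(-sN\right)\,\mathcal{L}_{I_d^c}(s)\,\mathbb{E}\!\left[\exp(-s I_u)\right],
\end{equation*}
using that $I_d^c$ (the D2D interference) and $I_u$ (the UAV interference) are independent because the D2D transmitter PPP is independent of the UAV location.

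Next I would evaluate the Laplace transform of $I_d^c$. Because the interfering D2D transmitters form a homogeneous PPP of intensity $\lambda_d$ and each interferer has an independent $\mathrm{Exp}(1)$ fading coefficient, the probability generating functional gives
\begin{equation*}
\mathcal{L}_{I_d^c}(s) = \exp\!\left(-\lambda_d \int_{\mathbb{R}^2}\!\left(1 - \mathbb{E}_g\!\left[e^{-s P_d \|x\|^{-\alpha_d} g}\right]\right)dx\right) = \exp\!\left(-\lambda_d \int_{\mathbb{R}^2}\frac{s P_d \|x\|^{-\alpha_d}}{1 + s P_d \|x\|^{-\alpha_d}}\,dx\right).
\end{equation*}
Switching to polar coordinates and using the standard identity $\int_0^\infty \frac{u}{1+u^{\alpha_d/2}}\,du = \frac{2\pi/\alpha_d}{\sin(2\pi/\alpha_d)}$ (valid for $\alpha_d>2$), I would obtain $\mathcal{L}_{I_d^c}(s) = \exp\!\bigl(-\pi\lambda_d(sP_d)^{2/\alpha_d}\cdot \tfrac{2\pi/\alpha_d}{\sin(2\pi/\alpha_d)}\bigr)$. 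Substituting $s = \beta d_0^{\alpha_d}/P_d$ cancels $P_d$ inside the power and reproduces the first exponential factor in the claim.

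The UAV interference term is deterministic in magnitude once the LoS/NLoS state is fixed, since (unlike the ground link) the air-to-ground model in equation (8) does not carry an extra small-scale fading coefficient. Marginalising over the binary LoS/NLoS event with probabilities $P_{\mathrm{LoS}}$ and $P_{\mathrm{NLoS}}$ from (9) gives
\begin{equation*}
\mathbb{E}\!\left[e^{-s I_u}\right] = P_{\mathrm{LoS}}\,\exp\!\bigl(-s P_u |X_u|^{-\alpha_u}\bigr) + P_{\mathrm{NLoS}}\,\exp\!\bigl(-s \eta P_u |X_u|^{-\alpha_u}\bigr),
\end{equation*}
with $|X_u|=\sqrt{h^2+r^2}$. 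Multiplying the three factors and substituting $s=\beta d_0^{\alpha_d}/P_d$ yields exactly the expression in (\ref{eq:Pd}).

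The only genuinely delicate step is the PPP interference integral: one must invoke Slivnyak's theorem to justify conditioning on a typical D2D receiver while treating the remaining transmitters as a PPP of the same intensity, and one must assume $\alpha_d>2$ for convergence of the radial integral. Everything else is bookkeeping: a Laplace transform factorisation and an expectation over a Bernoulli LoS indicator.
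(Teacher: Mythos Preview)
Your proof is correct and follows essentially the same route as the paper's own argument: rewrite the coverage event as a threshold on the exponential fading variable, factor the resulting Laplace transform into noise, D2D-interference, and UAV-interference pieces by independence, evaluate the D2D piece via the PGFL of the PPP, and evaluate the UAV piece as a Bernoulli average over the LoS/NLoS state. You actually give more detail than the paper on the radial integral (the paper simply cites a reference for that step) and you correctly flag the need for Slivnyak's theorem and the convergence condition $\alpha_d>2$, which the paper leaves implicit.
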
 \vspace{0.3cm}
}\vspace{-0.8cm}
\textcolor{black}{
\begin{proof}
See Appendix A.
\end{proof}
}
\textcolor{black}{
From this theorem, we can make several key observations. First, considering the fact that the UAV creates interference on the D2D users, increasing the UAV altitude to increase its distance from the D2D users  does not necessarily reduces the interference on the D2D users. As will be shown later by numerical simulations, by increasing the UAV altitude the D2D coverage probability decreases first, and then increases. This is due to the fact that, considering (9) and (10), although increasing the UAV altitude increases the path loss term, it also leads to a LoS probability. In general, the D2D users prefer to have the NLoS view towards the UAV and have a maximum distance from it, however, these two objectives conflicts with each other. Second, increasing the D2D transmit power ($P_d$), always enhances the D2D coverage probability, even in an interference limited scenario where noise is ignored. Typically, in the interference limited scenarios, increasing the transmit power of the D2D users does not improve the coverage performance due to the increased interference from other D2D transmitters. According to Theorem 1, although in the interference limited scenario ($N=0$) the first multiplying term in (10) is independent of $P_d$ due to the interference from D2D transmitters, the second term is an increasing function of $P_d$.  Finally, the D2D coverage probability in (10) decreases when the UAV transmit power increases. To cope with this situation, the D2D users can increase their transmit power or reduce the fixed distance parameter ($d_0$). In addition, decreasing the D2D user density improves the coverage probability due to decreasing the interference.}
\textcolor{black}{Note that the result presented in Theorem 1 corresponds to the coverage probability for a D2D user located at $(r,\varphi )$. To compute the average coverage probability in the cell, we consider a uniform distribution of users} over the area with $f(r,\varphi ) = \frac{r}{{\pi R_c^2}},  \hspace*{.2cm}  {\rm{0}} \le r \le {R_c} \hspace*{.1cm},\hspace*{.1cm} {\rm{0}} \le \varphi  \le 2\pi$\footnote{Note that the number of users has a Poisson distribution but their location follows the uniform distribution over the area}, \textcolor{black}{where  $R_c$ is the radius of the desired circular area.} Then, we compute the average over the desired area. The average coverage probability for D2D users will be:  
\begin{align}
{{\bar P}_{{\mathop{\rm cov}} ,{d}}}(\beta )& = {{\mathds {E}}_{r,}}_\varphi \left[ {{P_{{\mathop{\rm cov}} ,d}}(r,\varphi ,\beta )} \right]\nonumber\\
& = \exp \left( {\frac{{ - 2{\pi ^2}{ \lambda_d} {\beta ^{2/{\alpha _d}}}{d_0^2}}}{{{\alpha _d}\sin (2\pi /{\alpha _d})}} - \frac{{\beta {d_0^{{\alpha _d}}}N}}{{{P_d}}}} \right)\int\limits_0^{{R_c}} {{{\mathds {E}}_{{I_u}}}\left[ {\exp (\frac{{ - \beta {d_0^{{\alpha _d}}}{I_u}}}{{{P_d}}})} \right]} f(r,\varphi )\text{d}r\text{d}\varphi \nonumber\\
& = \exp \left( {\frac{{ - 2{\pi ^2}{\lambda_d} {\beta ^{2/{\alpha _d}}}{d_0^2}}}{{{\alpha _d}\sin (2\pi /{\alpha _d})}} - \frac{{\beta {d_0^{{\alpha _d}}}N}}{{{P_d}}}} \right)\int\limits_0^{{R_c}} {{{\mathds {E}}_{{I_u}}}\left[ {\exp (\frac{{ - \beta {d_0^{{\alpha _d}}}{I_u}}}{{{P_d}}})} \right]} \frac{{2r}}{{{R_c^2}}}\text{d}r.
\end{align}

From (11), we can see that the average coverage probability for D2D users increases as the size of the area, $R_c$, increases. In fact, when the UAV serves a larger area, the average distance of D2D users from the UAV increases and on the average they receive lower interference from it. Next, we provide a special case for (11) in which the UAV has a very high altitude or very small transmit power.
\textcolor{black}{
\begin{remark}
For ${P_u}=0$ or $h\to\infty$, the average coverage probability for the D2D users is simplified to \cite{baccelli}:
\begin{equation}\label{cov}
{\bar P_{{\mathop{\rm cov}} ,{\rm{d}}}}(\beta ) = \exp \left( {\frac{{ - 2{\pi ^2}{ \lambda_d} {\beta ^{2/{\alpha _d}}}{d_0^2}}}{{{\alpha _d}\sin (2\pi /{\alpha _d})}} - \frac{{\beta {d_0^{{\alpha _d}}}N}}{{{P_d}}}} \right),
\end{equation}
Note that, (\ref{cov}) corresponds to the coverage probability in overlay D2D communication in which there is no interference between the UAV and the D2D transmitters. It should be noted that, this result is also related to the success probability in a bipolar ad hoc network \cite{baccelli}. 
\end{remark}
} \vspace{-0.3cm}

\subsection{Coverage Probability for Downlink Users } 
\textcolor{black}{
Here, we first derive the bound and lower bound for the downlink users' coverage probability.}\vspace{-0.7cm}
\textcolor{black}{
\begin{theorem}
The lower bound and upper bound of the average coverage probability for DUs in the area of interest is given by:
\begin{align}
{\bar P^L_{{\mathop{\rm cov}} ,\textnormal{du}}}(\beta,h ) & = \int\limits_0^{{R_c}} {{{{P}}_{{\rm{LoS}}}}(r,h)} {L_I}\left(\frac{{{P_u}{|X_u|}^{ - {\alpha _u}}}}{\beta } - N\right)\frac{{2r}}{{R_c^2}}{\rm{d}}r \nonumber\\ &+\int\limits_0^{{R_c}} {{{{P}}_{{\rm{NLoS}}}}(r,h)} {L_I}\left(\frac{{\eta {P_u}{|X_u|}^{ - {\alpha _u}}}}{\beta } - N\right)\frac{{2r}}{{R_c^2}}{\rm{d}}r,\\
{\bar P^U_{{\mathop{\rm cov}} ,\textnormal{du}}}(\beta,h ) &= \int\limits_0^{{R_c}} {{{{P}}_{{\rm{LoS}}}}(r,h)} {U_I}\left(\frac{{{P_u}{|X_u|}^{ - {\alpha _u}}}}{\beta } - N\right)\frac{{2r}}{{R_c^2}}{\rm{d}}r \nonumber\\ &+\int\limits_0^{{R_c}} {{{{P}}_{{\rm{NLoS}}}}(r,h)} {U_I}\left(\frac{{\eta {P_u}{|X_u|}^{ - {\alpha _u}}}}{\beta } - N\right)\frac{{2r}}{{R_c^2}}{\rm{d}}r,
\end{align}
\end{theorem}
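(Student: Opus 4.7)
The plan is to reduce the DU coverage probability to the CDF of the aggregate D2D interference $I_d$ evaluated at a deterministic threshold, and then sandwich that CDF between the two bounds $L_I$ and $U_I$ that I treat as given (these would be established in a separate lemma, typically via Markov/Chernoff inequalities applied to the Laplace transform of $I_d$, which is available in closed form for a PPP with Rayleigh-faded D2D links).

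First, I would fix a DU at polar coordinates $(r,\varphi)$ and condition on whether its link to the UAV is LoS or NLoS. These two events occur with the elevation-angle-dependent probabilities $P_{\mathrm{LoS}}(r,h)$ and $P_{\mathrm{NLoS}}(r,h) = 1 - P_{\mathrm{LoS}}(r,h)$ given in~(9). The key observation, already emphasized in the air-to-ground model of Section~II-A, is that small-scale fading on the UAV-DU link is absorbed into the LoS/NLoS classification itself and can be neglected; therefore, once the link state and the user location are fixed, the received signal power $P_{r,u}$ from~(8) is \emph{deterministic}. Consequently, the coverage event $\gamma_u \geq \beta$ is equivalent to the interference-budget inequality
$I_d \leq P_{r,u}/\beta - N$,
and the conditional coverage probability is exactly $F_{I_d}\!\left(P_{r,u}/\beta - N\right)$, where $F_{I_d}$ is the CDF of the aggregate D2D interference seen at the DU's location.

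Next, I would invoke the bounds $L_I(x) \leq F_{I_d}(x) \leq U_I(x)$. Substituting $x = P_u|X_u|^{-\alpha_u}/\beta - N$ on the LoS branch and $x = \eta P_u|X_u|^{-\alpha_u}/\beta - N$ on the NLoS branch yields upper and lower bounds on the conditional coverage probability in each state. Weighting the two branches by $P_{\mathrm{LoS}}$ and $P_{\mathrm{NLoS}}$ gives a sandwich for the coverage probability at location $(r,\varphi)$. Averaging over the uniform spatial density $f(r,\varphi) = r/(\pi R_c^2)$, and noting that the integrand has no $\varphi$-dependence so that the angular integration contributes a factor $2\pi$ and leaves $2r/R_c^2$, produces the two integrals displayed in~(13) and~(14).

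The main obstacle, and the reason the theorem is stated as a sandwich rather than an identity, is that the CDF of $I_d$ admits no tractable closed form even though its Laplace transform does; the entire difficulty of the problem is pushed into constructing $L_I$ and $U_I$ that are simultaneously simple enough to integrate and tight enough to be informative. Everything else in the argument is a clean conditioning over LoS/NLoS followed by a polar integration, so the plan hinges on inheriting good interference-CDF bounds from a preceding lemma.
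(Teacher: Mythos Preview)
Your reduction is exactly the paper's: condition on LoS/NLoS, use the deterministic received power in~(8) to turn the SINR event into an interference-budget inequality, replace the unknown CDF $F_{I_d}$ by $L_I$ and $U_I$, and average with the radial density $2r/R_c^2$. That part is correct and matches Appendix~B line for line.

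The one substantive gap is that you treat $L_I$ and $U_I$ as a black box inherited from elsewhere, and your guess at their provenance (``Markov/Chernoff on the Laplace transform'') is not what the paper does. In the paper the derivation of these two bounds \emph{is} the proof: the authors split the interferers into the set $\Phi_1=\{i:P_d d_i^{-\alpha_d}g_i\ge T\}$ whose individual contribution already exceeds $T$ and its complement $\Phi_2$. The upper bound $U_I(T)=\mathds{P}[\Phi_1=\emptyset]$ is then computed via the PGFL of the PPP (a dominant-interferer argument, not a Chernoff bound). The lower bound comes from writing $\mathds{P}[I_d\le T]=\mathds{P}[\Phi_1=\emptyset]\big(1-\mathds{P}[I_d\ge T\mid\Phi_1=\emptyset]\big)$ and controlling the conditional tail by Markov's inequality on $\mathds{E}[I_d\mid\Phi_1=\emptyset]$, which is finite precisely because the truncation $g_i\le Td_i^{\alpha_d}/P_d$ makes the Campbell integral converge. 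So your plan is structurally right, but to complete it you would need this dominant-interferer decomposition rather than a Laplace-transform tail bound.
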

where $\beta N < {P_u}||{X_u}|{|^{ - {\alpha _u}}}$, and for any $T>0$, \begin{equation}
\begin{small}
{L_I}(T) = \left( {1 - \frac{{2\pi { \lambda_d} {\rm{ }}\Gamma (1 + 2/{\alpha _d})}}{{{\alpha _d} - 2}}{{\left( {\frac{T}{{{P_d}}}} \right)}^{ - 2/{\alpha _d}}}} \right)\exp\left( { - \pi{ \lambda_d} {{\left( {\frac{T}{{{P_d}}}} \right)}^{ - 2/{\alpha _d}}}\Gamma (1 + 2/{\alpha _d})} \right),
\end{small}
\end{equation}
\begin{equation}
{U_I}(T) =  \exp\left( { - \pi { \lambda_d} {{\left( {\frac{T}{{{P_d}}}} \right)}^{ - 2/{\alpha _d}}}\Gamma (1 + 2/{\alpha _d})} \right).
\end{equation}
Also, $\Gamma (t) = \int\limits_0^\infty  {{x^{t - 1}}{e^{ - x}}} \textnormal{d}x$ is the gamma function \cite{artin}.
\begin {proof}
See Appendix B.
\end{proof}
}
From Theorem 2, we can first see that, for $T >> {P_d}$, given that ${e^{ - x}} \approx 1 - x$ when $x \to 0$, we have ${U_I}(T) = {L_I}(T) \approx 1 - \pi {\lambda _d}{\left( {\frac{T}{{{P_d}}}} \right)^{ - 2/{\alpha _d}}}\Gamma (1 + 2/{\alpha _d})$. This means that the lower bound and upper bound become tighter for lower transmit power of D2D users. Moreover, from (15) and (16), when ${\lambda _d} \to\infty$, the number of D2D users tends to infinity and ${U_I}={L_I}=0$. Consequently, the downlink users experience an infinite interference from the D2D users which results in ${{\bar P}_{{\mathop{\rm {cov}}} ,\rm{du}}}=0$.

\textcolor{black}{
Furthermore, considering (9), (13), and (14), we can see that increasing the UAV altitude ($h$), can enhance the LoS probability and the coverage probability. On the other hand, due to increasing $|X_u|$, $L_I$ and $U_I$ decrease, and hence the coverage probability for downlink users decreases. Therefore, in order to achieve the maximum coverage, the altitude of the UAV should be carefully adjusted. 
}

As per Theorem 2, increasing ${R_c}$ decreases the average coverage probability for the downlink users. However, higher ${R_c}$ results in a higher D2D average coverage probability. Moreover, the average coverage probability for downlink users decreases as the density of the D2D users increases. In this case, to improve the  DUs coverage performance, one must increase ${P_u}$ or reduce ${R_c}$. Next, we derive the DU coverage probability in the absence of the D2D users.\vspace{-0.03cm}    

\begin{proposition}
For low density and transmit power of D2D users, the interference from D2D users is negligible compared to the UAV, then, the exact average coverage probability for the downlink users can be expressed by: 
\begin{equation}
{{\bar P}_{{\mathop{\rm cov}} ,\rm{du}}}(\beta ) =\int_0^{{\textnormal{min}}[{{(\frac{{{P_u}}}{{\beta N}})}^{{1/{\alpha _u}}}},{R_c}]} {{{{P}}_{{\rm{LoS}}}}(r)\frac{{2r}}{{R_c^2}}{\rm{ }}} \text{d}r + \int_0^{{\textnormal{min}}[{{(\frac{{{\eta P_u}}}{{\beta N}})}^{{1/{\alpha _u}}}},{R_c}]} {{{{P}}_{{\rm{NLoS}}}}(r)\frac{{2r}}{{R_c^2}}{\rm{ }}}\textnormal{d}r.
\end{equation}
\end{proposition}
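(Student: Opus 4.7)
The plan is to derive the coverage probability directly from its definition by eliminating the D2D interference term and then conditioning on the LoS/NLoS state of the UAV-to-user link. Since Proposition 1 assumes that $I_d$ is negligible, the downlink SINR in equation (5) reduces to the SNR, $\gamma_u \approx P_{r,u}/N$, so the coverage event $\{\gamma_u \ge \beta\}$ becomes a purely \emph{deterministic} condition on the received signal power, eliminating the need for any Laplace-transform machinery used in Theorem~2.

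First, I would apply the law of total probability to split the coverage event across the two propagation regimes described in equation (8):
\begin{equation*}
P_{\mathrm{cov},\mathrm{du}}(\beta \mid r) = P_{\mathrm{LoS}}(r)\,\mathds{P}\!\left[P_u|X_u|^{-\alpha_u} \ge \beta N\right] + P_{\mathrm{NLoS}}(r)\,\mathds{P}\!\left[\eta P_u|X_u|^{-\alpha_u} \ge \beta N\right].
\end{equation*}
Next, each inner probability is an indicator: the LoS term contributes $1$ iff $|X_u| \le (P_u/(\beta N))^{1/\alpha_u}$, and the NLoS term contributes $1$ iff $|X_u| \le (\eta P_u/(\beta N))^{1/\alpha_u}$. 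Translating these into horizontal-distance constraints and truncating by $R_c$ (since a user must lie within the cell) yields the two upper limits $\min[(P_u/(\beta N))^{1/\alpha_u},R_c]$ and $\min[(\eta P_u/(\beta N))^{1/\alpha_u},R_c]$ that appear in the statement.

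Finally, I would average over the user location by integrating against the uniform distribution $f(r,\varphi)=r/(\pi R_c^2)$, using the fact that both $P_{\mathrm{LoS}}(r)$ and $P_{\mathrm{NLoS}}(r)$ are independent of $\varphi$. The $\varphi$-integration produces the factor $2\pi$, which combines with $1/(\pi R_c^2)$ to give the $2r/R_c^2$ weighting in each integrand, yielding exactly equation (17). The main obstacle, and the only subtle point, is the handling of the truncation: one has to verify that if $(P_u/(\beta N))^{1/\alpha_u} < h$ the LoS integral should vanish (no horizontal distance satisfies the constraint), so care is needed in how the inequality on $|X_u|=\sqrt{h^2+r^2}$ is inverted; the compact form with $\min[\cdot,R_c]$ implicitly assumes the regime in which the SNR-limited radius exceeds $h$. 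Once this is clarified, the rest is bookkeeping rather than genuine analysis.
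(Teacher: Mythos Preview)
Your proposal is correct and follows essentially the same route as the paper: drop $I_d$, condition on LoS/NLoS, reduce each conditional probability to an indicator on the link distance, then average over the uniform user distribution to obtain the $2r/R_c^2$ weight and the truncated upper limits. The subtlety you flag about inverting the constraint on $|X_u|=\sqrt{h^2+r^2}$ rather than on $r$ is real; the paper's own proof silently writes the indicator directly in terms of $r$ and the final expression (17) inherits the same simplification, so your caveat is well placed but does not depart from what the paper does.
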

\begin{proof}
For a DU located at $(r,\varphi)$, the coverage probability in absence of D2D users becomes
\begin{align}
{P_{{\mathop{\rm cov}} ,\rm{du}}}(r,\varphi ,\beta ) &= \mathds {P}\left[ {\gamma _u} \ge \beta \right]  = {{{P}}_{{\rm{LoS}}}}(r)\mathds {P}\left[ {\gamma _u} \ge \beta |{\rm{LoS}}\right]  + {{{P}}_{{\rm{NLoS}}}}(r)\mathds {P}\left[ {\gamma _u} \ge \beta |{\rm{NLoS}}\right] \nonumber\\
 &= {{{P}}_{{\rm{LoS}}}}(r)\mathds{1}\left[ r \le {\left( {\frac{{{P_u}}}{{\beta N}}} \right)^{1/{\alpha _u}}}\right]  + {{{P}}_{{\rm{NLoS}}}}(r)\mathds{1}\left[ r \le {\left( {\frac{{\eta {P_u}}}{{\beta N}}} \right)^{1/{\alpha _u}}}\right],
 \end{align}
 The average coverage probability is computed by taking the average of ${P_{{\mathop{\rm cov}} ,\rm{du}}}(r,\varphi ,\beta )$ over the cell with the radius  ${R_c}$.
\begin{align}
{P_{{\mathop{\rm cov}} ,{du}}}(r,\varphi ,\beta ) &= {{\mathds {E}}_{r,}}_\varphi \left[ {{P_{{\mathop{\rm cov}} ,{du}}}(r,\varphi ,\beta )} \right]\nonumber\\
&=\int_0^{{\textnormal{min}}[{{(\frac{{{P_u}}}{{\beta N}})}^{{1/{\alpha _u}}}},{R_c}]} {{{{P}}_{{\rm{LoS}}}}(r)\frac{{2r}}{{R_c^2}}{\rm{ }}} \text{d}r + \int_0^{{\textnormal{min}}[{{(\frac{{{\eta P_u}}}{{\beta N}})}^{{1/{\alpha _u}}}},{R_c}]} {{{{P}}_{{\rm{NLoS}}}}(r)\frac{{2r}}{{R_c^2}}{\rm{ }}}\textnormal{d}r.
\end{align}

%
\end{proof}

Proposition 1 gives the exact expression for the downlink users' coverage probability when the interference from D2D users, due to their low density and low transmit power, is negligible compared to the UAV. Therefore, the result in Proposition 1 shows the maximum achievable coverage performance for downlink users when the received signal from the UAV is dominant compared to the interference from the D2D transmitters.
\subsection{System sum-rate}
\textcolor{black}{
Now, we investigate the average achievable rates for the DUs and D2D users which can be expressed as in \cite{shalmashi}: 
 \begin{gather}
{{\bar C}_{du}} = W{\log _2}(1 + \beta ){{\bar P}_{{\mathop{\rm cov}} ,du}}(\beta ),\\
{{\bar C}_d} = W{\log _2}(1 + \beta ){{\bar P}_{{\mathop{\rm cov}} ,d}}(\beta ),
\end{gather}
where  $W$  is the transmission bandwidth.}
\textcolor{black}{
Considering the whole DUs and D2D users in the cell, the system sum-rate, ${\bar C_{{\rm{sum}}}}$, can be derived as a function of the coverage probabilities and the number of users as follows:\vspace{-0.5cm}}

\begin{equation}
{\bar C_{{\rm{sum}}}} = {R_c}^2\pi {\lambda _{du}}{\bar C_{du}} + {R_c}^2\pi {\lambda _d}{\bar C_d}. 
\end{equation}
Assuming  $\mu  = \frac{{{\lambda _{du}}}}{{{\lambda _d}}}$, we have
\begin{equation}
{\bar C_{{\rm{sum}}}} = {\lambda _d}{R_c}^2\pi \left[ {\mu {{\bar P}_{{\mathop{\rm cov}} ,du}}(\beta ) + {{\bar P}_{{\mathop{\rm cov}} ,d}}(\beta )} \right]W{\log _2}(1 + \beta ),
\end{equation}
where ${R_c}^2\pi {\lambda _d}$ and ${R_c}^2\pi {\lambda _{du}}$ are the number of DUs and D2D users in the target area respectively.

From (30), observe that, on the one hand,  ${\bar C_{{\rm{sum}}}}$ is directly proportional to  ${\lambda _d}$, but on the other hand, it depends on the coverage probabilities of DUs and D2D users which both are decreasing  functions of D2D user density. Therefore, in general increasing ${\lambda _d}$ does not necessarily enhance the rate. Note that, considering (11), (13), (14), and (23), for both ${\lambda _d} \to 0$ and ${\lambda _d} \to \infty$ cases the system sum-rate tend to zero. Hence, there is an optimum value for ${\lambda _d}$ that maximizes ${\bar C_{{\rm{sum}}}}$.

According to (23), ${\bar C_{{\rm{sum}}}}$ is a function of the coverage probability and a logarithmic function of the threshold ($\beta$). The former is a decreasing function of $\beta$ whereas the latter is  an increasing function of $\beta$. In other words, although increasing the threshold is desirable for the rate due to increasing the logarithmic function, it also reduces the coverage probability. Therefore, in order to achieve a maximum rate, a proper value for the threshold can be adopted. It should be noted that, the SINR threshold, $\beta$, is typically fixed and cannot be set lower than the receiver sensitivity. However, the analysis of different values of $\beta$ brings value in order to understand how one could change the SINR threshold value (in the future) through proper resource allocation or just system design (change in the number of users, etc).

\section{Network with a Mobile UAV}

Now, we assume that the UAV can move around the area of radius $R_c$ in order to provide coverage for all the downlink users in the target area. In particular, we consider a UAV that moves over the target area and only transmits at a given geographical location (area) which we hereinafter refer to as \enquote{stop points}. Each stop point represents a location over which the UAV stops and serves the present downlink users. 
Here, our first goal is to minimize the number of stop points (denoted by $M$) and determine their optimal location. Note that, as the UAV moves, it can have a different channel to a user at different time instances. The objective of the UAV  is to cover the entire area and ensure that the coverage requirements for all DUs are satisfied with a minimum UAV transmit power and minimum number of stop points. In other words, we find the minimum number and location stop points for the UAV to completely cover the area. We model this problem by exploiting the so-called \emph{disk covering problem} \cite{kershner}. In the disk covering problem, given a unit disk, the objective is to find the smallest radius required for $M$ equal smaller disks to completely cover the unit disk. In the dual form of the problem, for a given radius of small disks, the minimum number of disks required to cover the unit disk is found.

In Figure 2, we provide an illustrative example to show the mapping between the mobile UAV communication problem and the disk covering problem. In this figure, the center of small disks can be considered as the location of stop points and the radius of the disk is the coverage radius of the UAV. Using the disk covering problem analysis, in Table I, we present, for different number of stop points, the minimum required coverage radius of a UAV for completely covering the target area  \cite{kershner,toth}. Thereby, using the dual disk covering problem, for a given maximum coverage radius of a UAV, we can find the minimum number of stop points for covering the entire area. The detailed steps for finding the minimum number of stop points are provided next. 

First, we compute the coverage radius of the UAV based on the minimum requirement for the DU coverage probability. The coverage radius is defined as the maximum radius within which the coverage probability for all DUs (located inside the coverage range) is greater than a specified threshold, $\epsilon$. In this case, the UAV satisfies the coverage requirement of each DU which is inside its coverage range. The maximum coverage radius for the UAV at an altitude $h$ transmitting with a power ${P_u}$  will be given by:
 \begin{equation}
 {R_m} = \max \{ R|{P_{{\mathop{\rm cov}},du}}(\beta ,R) \ge \varepsilon ,{P_u},h\}  = P_{{\mathop{\rm cov}},du}^{ - 1}(\beta ,\varepsilon ),
 \end{equation}
where $\varepsilon$ is the threshold for the average coverage probability in the cell (area covered by the UAV).
Note that, a user is considered to be in coverage if it is in the coverage range of the UAV.
The minimum required number of stop points for the full coverage is:\vspace{-0.2cm}

 \begin{equation}
\left\{ \begin{array}{l}
L=\min \{ M\},\vspace{-0.1cm}\\
{P_{{\mathop{\rm cov}} ,du}}(r,\varphi ,\beta ) \ge \varepsilon,
\end{array} \right.
\end{equation}
where $M$ represents the number of stop points, the second condition guarantees that the area is completely covered by the UAV, and $L$ is the minimum value for the number of stop points if the following condition holds:
\begin{equation}
{R_{\min ,L}} \le {R_m} \le {R_{\min ,L-1}} \to \min \{ M\}  = L.
\end{equation}
By using Table I, we see that, ${R_{\min ,L - 1}}$ and ${R_{\min ,L }}$ are, respectively, the minimum radius required to cover the entire target area with $L-1$ and $L$ disks.   
\textcolor{black}{
After finding the minimum $M$, we can reduce the UAV transmission power such that the coverage radius decreases to the minimum required radius (${R_{\min, L }}$). In this way, the UAV transmit power is minimized. Thus we have
 \begin{equation} 
 {P_{u,\min }} = \mathop {{\mathop{\rm argmin}\nolimits} }\limits_{{P_u}} \{ P_{{\mathop{\rm cov}},\rm{du}}^{ - 1}(\beta ,\varepsilon ) = {R_{\min, L }}|h\},
 \end{equation}
where ${P_{u,\min }}$ is the minimum UAV transmit power.
Thereby, the minimum number of stop points leads to a full coverage at a minimum time with a  minimum  required transmit power.}

In summary, the proposed UAV deployment method that leads to the complete coverage with a minimum time and transmission power proceeds as follows. First, depending on the parameters of the problem such as density of users and threshold, we compute the maximum coverage radius of a UAV at the optimal altitude that can serve the DUs. Second, considering the size of target area, using the disk covering problem, we find the minimum required number of transmission points along with the coverage radius at each point. Third, we reduce the transmission power of UAV such that its maximum coverage radius becomes equal to the required coverage radius found in the previous step. Using the proposed method, the target area can be completely covered by the UAV with a minimum required transmit power and minimum number of stop points. \textcolor{black}{Next, we investigate the impact of the number of stop points on the full coverage time of the downlink users, and the overall outage probability of the D2D users.}


\begin{figure}[!t]
  \begin{center}
   \vspace{-0.2cm}
    \includegraphics[width=7cm]{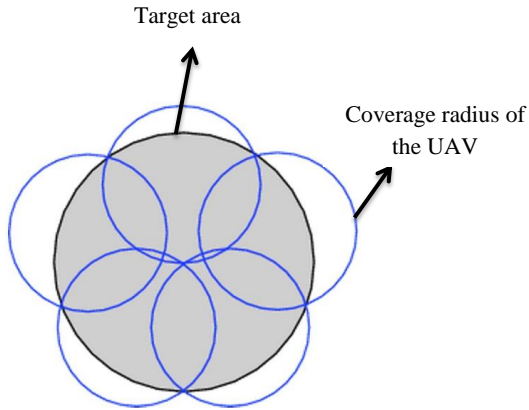}
    \vspace{-0.1cm}
    \caption{\label{fig: Fig1} Five disks covering problem.}
  \end{center}\vspace{-0.2cm}
\end{figure}

\begin{table}[!t]
\normalsize
\begin{center}
\captionof{table}{Number and radii of disks in the covering problem.}
\resizebox{11cm}{!}{
\begin{tabular}{|c|c|}

 \hline
 \textbf{Number of stop points} & \textbf { Minimum required coverage radius (${R_{\min }}$)}\\
\hline
$M=1,2$  & ${R_c}$   \\
\hline
$M=3$  & ${{\textstyle{{\sqrt 3 } \over 2}}R_c}$  \\
\hline
$M=4$  & ${\textstyle{{\sqrt 2 } \over 2}}{R_c}$  \\
\hline
$M=5$  & $0.61{\rm{ }}{R_c}$  \\
\hline
$M=6$  & $0.556{\rm{ }}{R_c}$ \\
\hline
$M=7$  & $0.5{\rm{ }}{R_c}$  \\
\hline
$M=8$  & $0.437{\rm{ }}{R_c}$ \\
\hline
$M=9$  & $0.422{\rm{ }}{R_c}$ \\
\hline
$M=10$  & $0.398{\rm{ }}{R_c}$ \\
\hline
$M=11$  & $0.38{\rm{ }}{R_c}$ \\
\hline
$M=12$  & $0.361{\rm{ }}{R_c}$ \\
\hline 
\end{tabular}\vspace{0.4cm} 
}
\end{center}
\end{table}
\textcolor{black}{
We consider the network during $M$ time instances in which the UAV and D2D users will execute $M$ retransmissions. Note that, our system model considers the downlink, therefore, the retransmissions are essentially from the UAV to the DUs, and from D2D transmitters to corresponding receivers. The moving UAV satisfies the coverage requirements of the downlink users in $M$ retransmissions from different locations. Clearly, as the number of stop points ($M$) increases, the time required for UAV to completely cover the desired area, increases. Here, the time that the UAV needs to provide the full coverage for the area by visiting all the stop points, is called delay. Hence, the delay depends on the travel time of the UAV between the stop points, and the time that UAV spends at each stop point for transmissions. Thus, the delay can be written as:\vspace{-0.3cm}
\begin{equation}
\tau  = {T_{tr}} + M{T_s}\vspace{-0.2cm}
\end{equation}
where $T_{tr}$ is the total UAV travel time, $M$ is the number of stop points, and $T_s$ is the time that the UAV stays at each stop point. Clearly, the travel time depends on the travel distance and location of the stop points, and the speed of the UAV. The total travel time will clearly increase as the number of stop points increases. However, in general, the exact relationship between $T_{tr}$ and $M$ strongly depends on the locations of the stop points which do not necessary follow a fixed path/distribution for different values of $M$. As an example, it can be shown that the exact travel time for $M=3$ and $M=4$ is $\frac{{\sqrt 3 {R_c}}}{v}$ and  $\frac{{3{R_c}}}{v}$ respectively, where $v$ is the speed of the UAV, and $R_c$ is the radius of the desired area.}
\textcolor{black}{
The residence time, $T_s$, depends on the multiple access method. If the UAV adopts a time division multiple access (TDMA) technique, the residence time will be a function of the number of stop points. In fact, a higher number of stop points corresponds to a smaller coverage region of the UAV. In this case, at each residence point, the UAV needs to provide service for a fewer number of users. Therefore, by increasing the number of stop points, the residence time can be decreased in the TDMA case. Considering a uniform distribution of the users, the residence time is approximately computed as:
\begin{equation}
{T_s} \approx {T_{s,1}}\frac{{R_{\min }^2(M)}}{{R_c^2}}U,
\end{equation}
where $T_{s,1}$ is the service time of UAV for each downlink user, $U$ is the number of downlink users, $R_{\min }$ is the coverage radius of the UAV which depends on $M$, the number of the stop points, and $R_c$ is the radius of the desired area.}
\textcolor{black}{However, if the UAV uses a frequency division multiple access (FDMA) technique, the users can be served simultaneously. In other words, the UAV does not need to use different time slots to serve the users. Therefore, if users are of homogeneous traffic type, the residence time of the UAV at each stop point does not depend on the number of the users, and hence it can be fixed. In this case, the residence time at each stop point will be constant and it does not depend on the coverage radius of the UAV and the number of stop points. As a result, ${T_s} = {T_{s,1}}$. In our model, we have considered FDMA for multiple access. Hence, the residence time is the same for all values of $M$.  In Figure \ref{delay}, we have shown the total delay versus the number of stop points for two values of residence time, and $v=10$\,m/s. As expected, the total delay increases as the number of stop points increases. Moreover, when the residence time of the UAV at each stop point increases, the additional delay due to a higher number of stop points increases. As we can see from Figure \ref{delay}, for $T_{s,1}=20$\,s, the delay increases from 230\,s to 480\,s if the number of stop points increase from 3 to 10. However, for $T_{s,1}=40$\,s the delay increases from 295\,s to 690\,s. Clearly, the delay and the number of stop points are directly related. It should be noted that, for our simulations, we consider the number of stop points as delay.}
\begin{figure}[!t]
  \begin{center}
   \vspace{-0.2cm}
   \includegraphics[width=9cm]{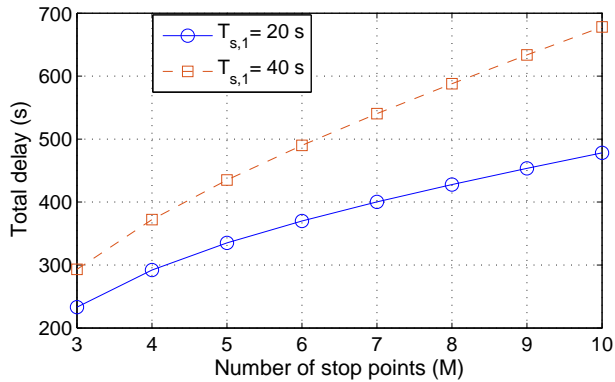}
    \vspace{-0.095cm}
    \caption{ Total delay increases as the number of stop points. \vspace{-.2cm}}
    \label{delay}
  \end{center}\vspace{-0.07cm}
\end{figure}

\textcolor{black}{Next, we derive the overall outage probability for a typical D2D user in the $M$ time instances for the mobile UAV case. The outage probability is the probability of having at least one failure during $M$ retransmissions. Assume that the relative location of the ${i^{th}}$ stop point with respect to the D2D user is $({r_i},{h_i})$  where ${r_i}$  is the distance between the projection of the UAV on the ground and D2D user and ${h_i}$  is the UAV altitude. Clearly, the distance between the user and UAV is $\left| {{X_{u,i}}} \right| = \sqrt {h_i^2 + {r_i}^2} $. For different time slots, the Rayleigh fading changes and can be considered independent \cite{martin}. However, since the location of the D2D users do not significantly change during the multiple time slots, the interference from the D2D users are correlated. Then, the overall outage probability for D2D users can be found in the next theorem. }
\textcolor{black}{
\begin {theorem}
The overall outage probability for D2D users in $M$ retransmissions considering moving UAV is given by:
\begin{small}
\begin{align}
{P_{out,d}}=1- \exp \left( { - {\lambda _d}\int_{{R^2}} {\left[ {1 - {{\left( {\frac{1}{{1 + \frac{{\beta |x{|^{ - {\alpha _d}}}}}{{d_0^{ - {\alpha _d}}}}}}} \right)}^M}} \right]{\rm{d}}x} } \right) \prod\limits_{i = 1}^M {{\mathds {E}_{{I_{u,i}}}}\left[ {\exp \left( {\frac{{ - d_0^{{\alpha _d}}\beta {I_{u,i}}}}{{{P_d}}}} \right)} \right]}\small {\exp \left( {\frac{{ - d_0^{{\alpha _d}}\beta MN}}{{{P_d}}}} \right)},
\end{align}
\end{small}where $M$ is the number of retransmissions, $I_{u,i}$ is the interference from the UAV at $i^{th}$ retransmission, and $E_{I_{u,i}}$(.) is: \\
${{\mathds {E}_{{I_{u,i}}}}\left[ {\exp \left( {\frac{{ - d_0^{{\alpha _d}}\beta {I_{u,i}}}}{{{P_d}}}} \right)} \right]}={{P_{{\rm{LoS,}}i}}\exp \left( {\frac{{ - \beta d_0^{{\alpha _d}}{P_u}|{X_{u,i}}{|^{ - {\alpha _d}}}}}{{{P_d}}}} \right) + {P_{{\rm{NLoS,}}i}}\exp \left( {\frac{{ - \beta d_0^{{\alpha _d}}\eta {P_u}|{X_{u,i}}{|^{ - {\alpha _d}}}}}{{{P_d}}}} \right)}$.\vspace{-0.2cm}
\end{theorem}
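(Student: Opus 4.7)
The plan is to express the overall outage probability as the complement of the joint success probability across the $M$ retransmissions, and then to exploit the standard ingredients used already in the proof of Theorem~1: exponential (Rayleigh) desired fading, independence of fading across both interferers and retransmissions, and the probability generating functional (PGFL) of the D2D PPP $\Phi_B$. Writing $P_{out,d}=1-P\!\left[\bigcap_{i=1}^{M}\{\gamma_{d,i}\ge\beta\}\right]$, I first condition on the positions of the interfering D2D transmitters, the UAV interference values $\{I_{u,i}\}$, and the fading of all interferers. Since at each time slot the desired channel gain $g_{0,i}$ is exponential (unit-mean) and independent across $i$, the event $\gamma_{d,i}\ge\beta$ conditional on everything else occurs with probability $\exp(-\beta d_0^{\alpha_d}(I_d^{c,i}+I_{u,i}+N)/P_d)$, and by independence of $\{g_{0,i}\}_{i=1}^M$ the joint conditional success probability is the product of these exponentials over $i$.

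Next, I would collect the three contributions coming out of the product. The noise gives a clean deterministic factor $\exp(-M\beta d_0^{\alpha_d}N/P_d)$. For the UAV interference I would invoke the working assumption that $\{I_{u,i}\}$ are independent across stop points (each $I_{u,i}$ is determined by the LoS/NLoS random variable and the geometry at stop point $i$), so the UAV contribution factorizes as $\prod_{i=1}^M \mathds{E}_{I_{u,i}}\!\left[\exp(-\beta d_0^{\alpha_d}I_{u,i}/P_d)\right]$; the closed form for each factor is exactly the LoS/NLoS mixture already derived in Theorem~1 and is what appears in the statement.

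The main obstacle, and the only place where the retransmission dimension really interacts with the spatial randomness, is the D2D interference term, because the D2D positions are the same across all $M$ retransmissions while the small-scale fading is refreshed independently. I would handle it as follows: sum the exponents over $i$ and swap the order of summation to write $\sum_{i=1}^M I_d^{c,i}=\sum_{j\ne 0}P_d|x_j|^{-\alpha_d}\sum_{i=1}^M g_{j,i}$, then take the expectation over the iid unit-exponential fading variables $g_{j,i}$ to get, for each interferer at distance $|x_j|$, a factor $(1+\beta d_0^{\alpha_d}|x_j|^{-\alpha_d})^{-M}$. What remains is the expectation of a product over points of $\Phi_B$ of this per-point function, and the PGFL of the PPP of intensity $\lambda_d$ converts it to
\begin{equation*}
\exp\!\left(-\lambda_d\int_{\mathbb{R}^2}\left[1-\left(\frac{1}{1+\beta d_0^{\alpha_d}|x|^{-\alpha_d}}\right)^{M}\right]\mathrm{d}x\right),
\end{equation*}
which, together with the noise and UAV factors assembled above, reproduces the expression in the theorem after taking $1$ minus the joint success probability. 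The delicate point to be careful about is that the per-interferer factor must be raised to the power $M$ rather than $M$ independent copies of the $M{=}1$ factor, and this is precisely the consequence of the temporal correlation of the D2D positions, so I would emphasize explicitly that the fading expectation is taken \emph{before} the PGFL step.
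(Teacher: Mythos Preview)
Your proposal is correct and follows essentially the same route as the paper: write $P_{out,d}=1-\mathds{P}\big[\bigcap_i\{\gamma_{d,i}\ge\beta\}\big]$, use the exponential CCDF of the desired fading and independence across slots to obtain a product of exponentials, then factor into noise, UAV, and D2D contributions and handle the UAV term via independence across stop points. The only difference is that where the paper simply cites \cite{martin} for the correlated D2D interference term, you actually carry out the computation (swap the sums, average the i.i.d.\ fading to get the per-point factor $(1+\beta d_0^{\alpha_d}|x|^{-\alpha_d})^{-M}$, then apply the PGFL), which is exactly the argument behind that citation.
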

}\vspace{-0.5cm}
\textcolor{black}{
\begin{proof}
See Appendix C.
\end{proof}
From Theorem 3, we can observe that, increasing $M$ leads to a higher outage probability. In fact, as the number of stop points increases, the UAV creates a stronger interference on the D2D users. Consequently, $P_{out,d}$ tends to 1 for $M \to \infty$. However, the higher number of stop points for UAV enhances the coverage performance of the downlink users. Hence, a tradeoff between coverage performance of downlink users and the outage of D2D communications should be taking into account. Moreover, Theorem 3 shows that, in order to guarantee that the outage probability does not exceed a specified threshold for different values of $M$, we should adaptively reduce the distance between the D2D transmitter and receiver ($d_0$), or have orthogonal spectrum. 
}\vspace{-0.7cm}

\section{Simulation Results and Analysis}

\subsection{The static UAV scenario}
First, we compare our analytical results of the coverage probabilities using numerical simulations. Table II lists parameters used in the simulation and statistical analysis. These parameters are set based on typical values such as in \cite{HouraniOptimal} and \cite{shalmashi}. Here, we will analyze the impact of the various parameters such as the UAV altitude, D2D density, and SINR threshold on the performance evaluation metrics.

In Figures \ref{Pc_D2D} and \ref{Pc_DU}, we show, respectively, the D2D coverage probability and the lower and upper bounds for the DU coverage probability for different SINR detection threshold values. From these figures, we can clearly see that, the analytical and simulation results for D2D match perfectly and the analytical bounds for DU coverage probability and the exact simulation results are close. Figures \ref{Pc_D2D} and \ref{Pc_DU} show that, by increasing the threshold, the coverage probability for D2D users and DUs will decrease. 

\begin{table} [t]
\begin{center}
\captionof{table}{Simulation parameters.}
\resizebox{11cm}{!}{
\begin{tabular}[b]{|c|c|c|} 
\hline
\textbf{Description}  & 	\textbf{Parameter} 	 & \textbf{Value}  \\
\hline
\text{UAV transmit power}   & ${P_u}$  & $5$ W  \\
\hline
\text{D2D transmit power}   & ${P_d}$  & $100$ mW  \\
\hline
\text{Path loss coefficient}   & $K$  & $-30$ dB  \\
\hline
\text{Path loss exponent for UAV-user link}   & ${\alpha _d}$  & $2$  \\
\hline
\text{Path loss exponent for D2D link}   & ${\alpha _u}$  & $3$  \\
\hline
\text{Noise power}   & $N$  & $-120$ dBm  \\
\hline
\text{Bandwidth}   & $W$  & $1$ MHz  \\
\hline
\text{D2D pair fixed distance }   & $d_0$  & $20$ m \\
\hline
\text{Excessive attenuation factor for NLoS }   & $\eta$  & $20$ dB \\
\hline 
\text{Parameters for dense urban environment}   & $B$, $C$  & $0.136$, $11.95$ \\
\hline 
\end{tabular}\vspace{0.4cm} 
}
\end{center}
\end{table}

\begin{figure}[!t]
  \begin{center}
   \vspace{-0.2cm}
    \includegraphics[width=9cm]{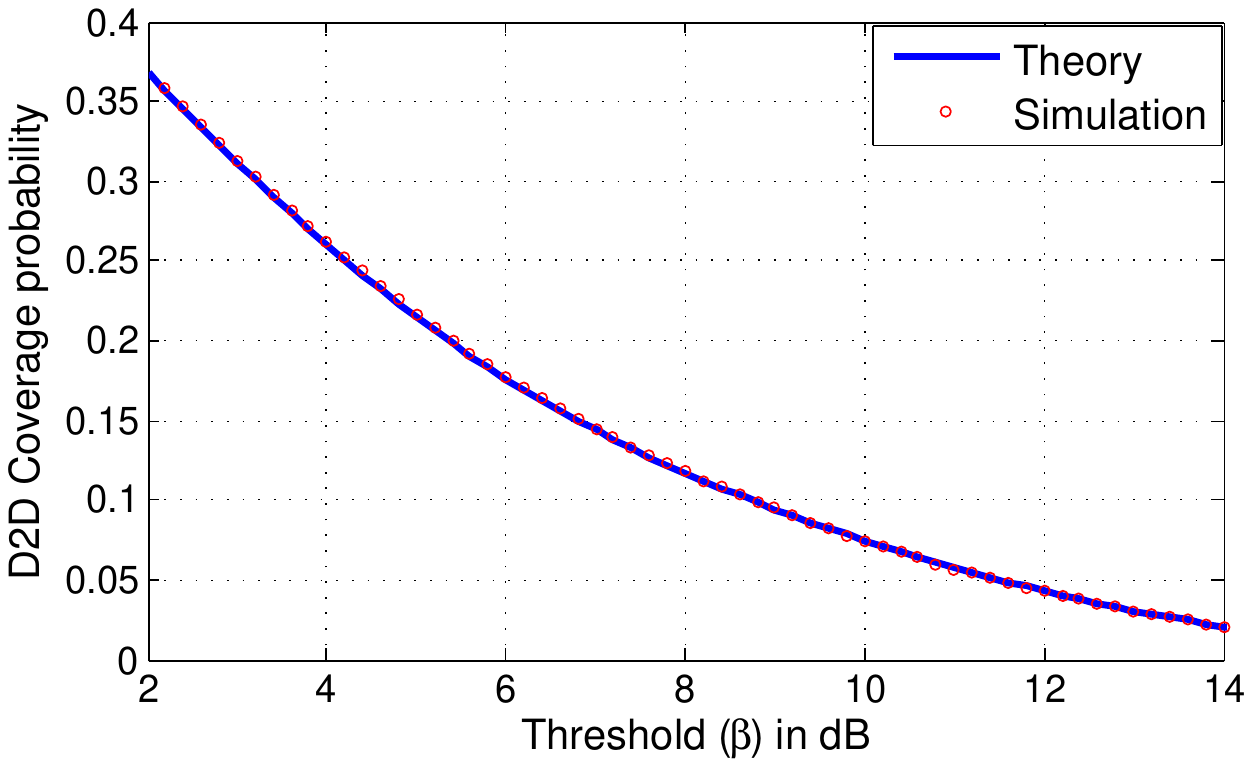}
    \vspace{-0.1cm}
    \caption{{D2D coverage probability vs. SINR threshold}}
    \label{Pc_D2D}
  \end{center}\vspace{-0.4cm}
 \end{figure}

 \begin{figure}[!t]
  \begin{center}
   \vspace{-0.2cm}
    \includegraphics[width=9.5cm]{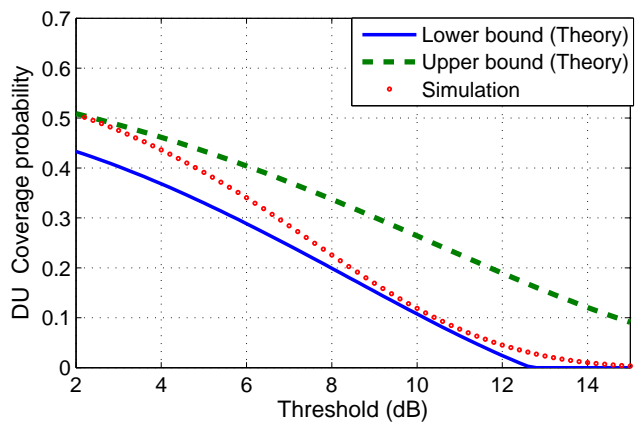}
    \vspace{-0.1cm}
    \caption{{ \textcolor{black}{DU coverage probability vs. SINR threshold. }}}
    \label{Pc_DU}
  \end{center}\vspace{-0.2cm}
\end{figure}


\textcolor{black}{Figure \ref{ASR_beta} illustrates the system sum-rate (Gbps) versus the threshold for 1 MHz transmission bandwidth, ${\lambda _{du}} = {10^{ - 4}}$, $h=500$ m, and two different values of $\lambda_{d}$. By inspecting (23) in Section III, we can see that the rate depends on the coverage probability, which is a decreasing function of the threshold, $\beta$, and an increasing logarithmic function of it. Clearly, for high values of  $\beta$, the received SINR cannot exceed the threshold and, thus, the coverage probabilities tend to zero. On the other hand, according to (20) and (21), as $\beta$ increases, ${\log _2}(1 + \beta )$ increases accordingly. However, since the coverage probability exponentially decreases but ${\log _2}(1 + \beta )$ increases logarithmically, the average rate tends to zero for the high values of $\beta $. Furthermore, for $\beta  \to 0$, since ${\log _2}(1 + \beta )$ tends to zero and the coverage probabilities approach one, the rate becomes zero. }

\begin{figure}[t]
  \begin{center}
   \vspace{-0.2cm}
    \includegraphics[width=9cm]{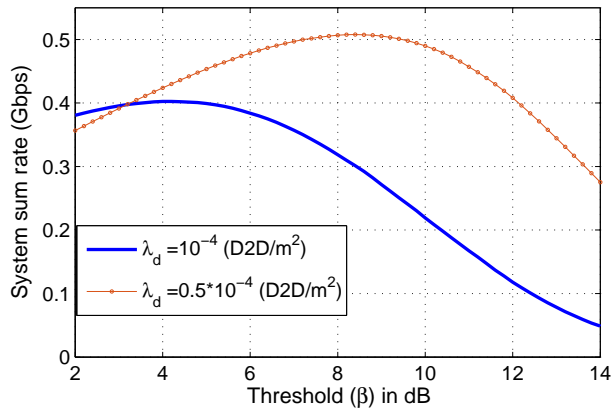}
    \vspace{-0.1cm}
    \caption{ System sum-rate vs. SINR threshold. }
    \label{ASR_beta}
  \end{center}\vspace{-0.2cm}
\end{figure}

Figure \ref{ASR3} shows the impact of D2D density on the sum-rate. In this figure, we can see that a low D2D density yields low interference. However, naturally, decreasing the number of D2D users in an area will also decrease the sum-rate. For high D2D density, high interference reduces the coverage probability and consequently the data rate for each user. However, since the sum-rate is directly proportional to the number of D2D users, increasing the  D2D density can also improve the sum-rate.  According to the Figure 6, as the density of downlink users increases, the optimal ${\lambda _d}$ that maximizes the sum-rate decreases. This is due to the fact that, as ${\lambda _{du}}$  increases, the contribution of DUs in the system sum-rate increases and hence increasing the rate of each DU enhances the system sum-rate. To increase the rate of a DU, the number of D2D users as the interference source for DUs should be reduced. As a result, the optimal ${\lambda _d}$ decreases as as ${\lambda _{du}}$  increases. For instance as shown in the figure, by increasing ${\lambda _{du}}$ from $10^{ - 4}$ to $4 \times {10^{ - 4}}$, the optimal ${\lambda _d}$ decreases from $0.9 \times {10^{ - 4}}$ to $0.3 \times {10^{ - 4}}$.

It is important to note that the value of the fixed distance, $d_0$,  between the D2D pair significantly impacts the rate performance. Figure \ref{UAV_D2D} shows the  ${\bar C_{{\rm{sum}}}}$ as a function of the density of D2D users and $d_0$. From this figure, we can see that, the rate increases as the fixed distance between a D2D receiver and its corresponding transmitter decreases. Moreover, the optimal D2D density which leads to a maximum ${\bar C_{{\rm{sum}}}}$, increases by decreasing $d_0$. In fact, for lower values of $D$ we can have more D2D users in the network. For instance, by reducing $d_0$ from 8 m to 5 m, the optimum average number of D2D users increases by a factor of 3.

\begin{figure}[!t]
  \begin{center}
   \vspace{-0.2cm}
    \includegraphics[width=9cm]{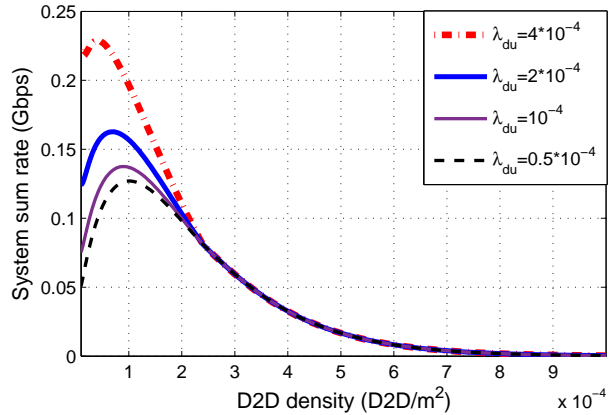}
    \vspace{-0.1cm}
    \caption{ System sum-rate vs. D2D density (number of D2D pairs per $\rm{m}^2$).}
    \label{ASR3}
  \end{center}\vspace{-0.5cm}
\end{figure}

\begin{figure}[!t]
  \begin{center}
   \vspace{-0.2cm}
    \includegraphics[width=9cm]{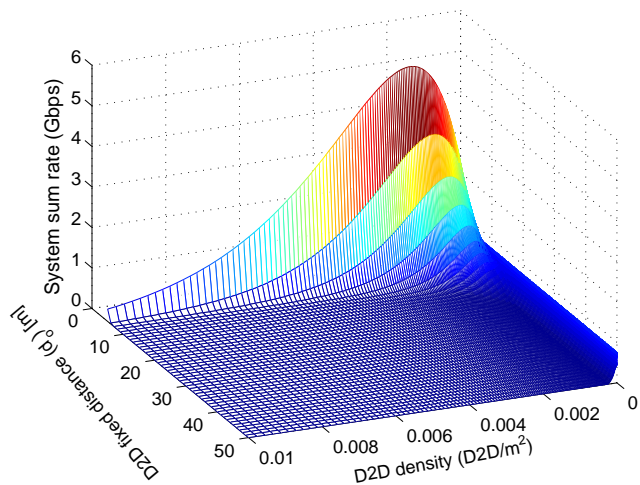}
    \vspace{-0.1cm}
    \caption{ System sum-rate vs. D2D density and $d_0$. }
    \label{UAV_D2D}
  \end{center}\vspace{-0.2cm}
\end{figure}

Figure \ref{Pcov} shows the coverage probability for DUs and D2D users as a function of the UAV altitude. From the DUs' perspective, the UAV should be at an optimal altitude such that it can provide a maximum coverage. In fact, the UAV should not position itself at very low altitudes, due to high shadowing  and a low probability of LoS connections towards the DUs. On the other hand, at very high altitudes, LoS links exist with a high probability but the large distance between UAV and DUs results in a high the path loss. As shown in Figure \ref{Pcov}, for $h= 500$ m the DU coverage probability is maximized. Note that from a D2D user perspective, the UAV creates interference on the D2D receiver. Therefore, D2D users prefer the UAV to be at an altitude for which it provides a minimum coverage radius. As seen in Figure 8, for $h \to \infty$, the D2D users achieve the maximum performance. However, $h= 800$ m results in a minimum D2D coverage probability due the high interference from the UAV. 

\begin{figure}[!t]
  \begin{center}
   \vspace{-0.2cm}
    \includegraphics[width=9cm]{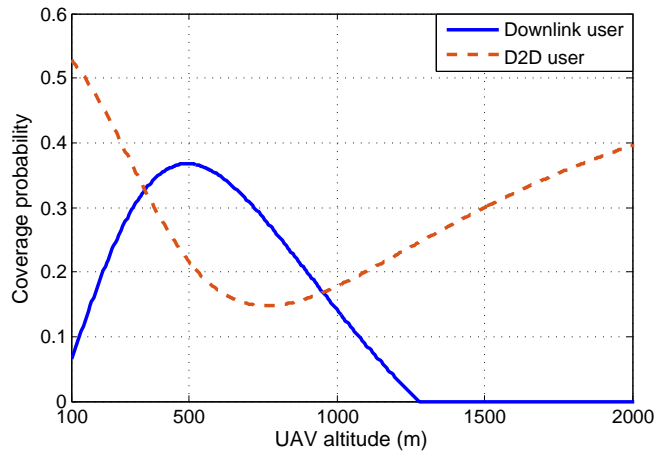}
    \vspace{-0.1cm}
    \caption{Coverage probability vs. UAV altitude.}
    \label{Pcov}
  \end{center}\vspace{-0.2cm}
\end{figure}

 \textcolor{black}{
Figure \ref{Hopt} shows the optimal UAV altitude that maximizes DU coverage probability versus the D2D users' density.  As we can see from Figure \ref{Hopt}, the optimal UAV altitude for downlink users decreases as the number of D2D users increases. This is due to the fact that a higher density of D2D users creates higher interference on the downlink users, and consequently the UAV reduces its altitude to improve SINR value for the downlink users. In other words, the UAV positions itself closer to the downlink users to cope with the high interference caused by the increased number of D2D users. From Figure \ref{Hopt}, we can see that, the optimal UAV altitude is independent of the fixed distance, $d_0$, between the D2D transmitter and receiver pair. In fact, the distance between D2D users does not affect the amount interference generated on the downlink users. Therefore, the optimal altitude of the UAV does not change if $d_0$ changes.  
}  

\begin{figure}[!t]
  \begin{center}
   \vspace{-0.2cm}
    \includegraphics[width=9cm]{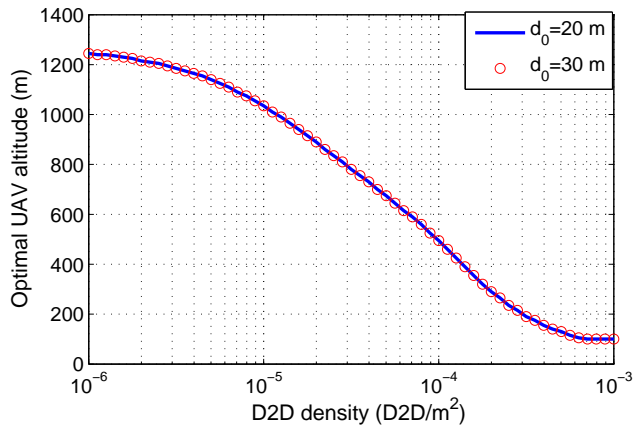}
    \vspace{-0.1cm}
    \caption{ \textcolor{black}{Optimal UAV altitude vs. D2D density.}}
      \label{Hopt}
  \end{center}\vspace{-0.4cm}
\end{figure}

Figure \ref{ASR} shows ${\bar C_{{\rm{sum}}}}$  versus the UAV altitude for different values of the fixed distance, $d_0$, the fixed distance between a D2D transmitter/receiver pair. The optimum values for the height which lead to a maximum ${\bar C_{{\rm{sum}}}}$ are around $300$\,m, $350$\,m, and $400$\,m for $d_0=20\,\rm{m}, 25\,\rm{m}$ and $30$~m. Note that the optimal $h$ that maximizes the sum-rate depends on the density of DU and D2D users. From Figure \ref{ASR}, considering $d_0=20$\,m as an example, we can see that for $h > 1300$\,m, the system sum-rate starts increasing. This stems from the fact that the DU coverage probability tends to zero and, thus,  only D2D users impact ${\bar C_{{\rm{sum}}}}$. Hence, as the UAV moves up in altitude, the interference on D2D users decreases and ${\bar C_{{\rm{d}}}}$  increases. Moreover, for $300\,\textnormal{m} < h < 1300\,\textnormal{m}$, Figure \ref{ASR} shows that the coverage probability and, consequently, the average rate for the downlink users decrease as the altitude increases. However, increasing the UAV altitude reduces the interference on the D2D users and improves the average rate for D2D users. In addition, in this range of $h$, since DUs have more contributions on ${\bar C_{{\rm{sum}}}}$ than the D2D users, ${\bar C_{{\rm{sum}}}}$ is a decreasing function of altitude.

\begin{figure}
  \begin{center}
   \vspace{-0.2cm}
    \includegraphics[width=9cm]{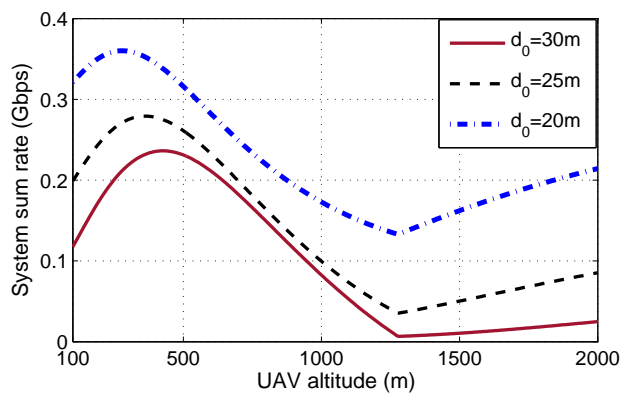}
    \vspace{-0.1cm}
    \caption{ System sum-rate vs. UAV altitude.}
    \label{ASR}
  \end{center}\vspace{-0.4cm}
\end{figure}
\subsection{The mobile UAV scenario}
Next, we study the mobile UAV scenario. In this case, we can satisfy the coverage requirement for all the DUs. In fact, the UAV moves over the target area and attempts to serve the DUs at the stop points to guarantee that all the DUs will be in its coverage radius. 

Figure \ref{Rc_lambda} shows the coverage radius of the mobile UAV when it is located at the optimal altitude as the D2D density varies. As expected, the coverage radius decreases as the D2D density increases. For instance, for $\varepsilon  = 0.6$, when  ${\lambda _d}$  increases from ${10^{ - 5}}$ to ${10^{ - 4}}$, the coverage radius decreases from $1600$\,m to $300$\,m. \textcolor{black}{Moreover, by reducing the minimum coverage requirement of DUs, the UAV can cover a larger area. For instance, reducing $\epsilon$ from 0.6 to 0.4 increases the UAV coverage radius from $290$\,m to $380$\,m for ${\lambda _d=10^{ - 4}}$. Note that, since the main goal of the UAV is to provide coverage for the entire target area, to compensate for the low coverage radius, we should increase the number of stop points for serving the DUs and consequently a longer time is required for the full coverage.}

\begin{figure}[!t]
  \begin{center}
   \vspace{-0.2cm}
    \includegraphics[width=9cm]{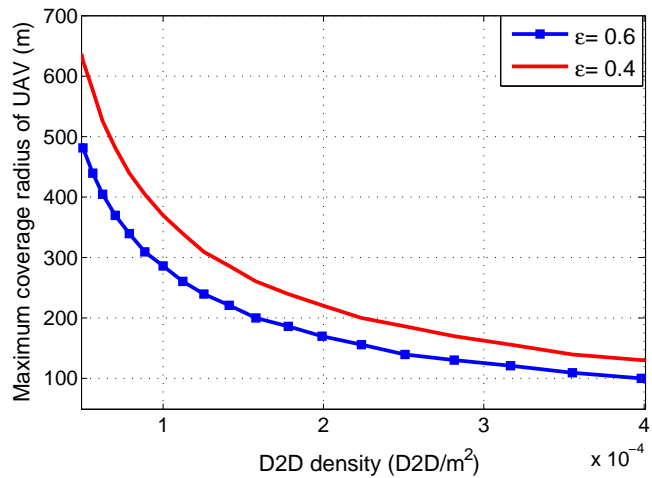}
    \vspace{-0.1cm}
    \caption{\label{fig: Fig1} Maximum UAV coverage radius vs. D2D density (number of D2D pairs per $\rm{m}^2$).}
    \label{Rc_lambda} 
  \end{center}\vspace{-0.2cm} 
\end{figure}

In Figure \ref{Stop_points}, we show the minimum number of stop points as a function of the D2D user density. In this figure, we can see that, as expected, the number of stop points must increase when the density of D2D users increases. In fact, to overcome the higher interference caused by increasing the number of D2D users, the UAV will need more stop points to satisfy the DUs' coverage constraints. For instance, when ${\lambda_d}$ increases from $0.2 \times {10^{ - 4}}$ to $0.8 \times {10^{ - 4}}$, the number of stop points must be increased from 3 to 8. 
Note that, when computing the minimum number of stop points for each ${\lambda_d}$, we considered optimal values for the UAV altitude such that it can provide a maximum coverage for the DUs. Therefore, the UAVs altitude changes according to the D2D density. Moreover, as seen from Figure \ref{Stop_points}, the minimum number of stop points remains constant for a range of  ${\lambda_d}$. This is due to the fact that the number of stop points is an integer and hence, for different values of ${\lambda_d}$, the integer value will be the same. However, although the minimum number of stop points for two different D2D densities  are the same, the UAV can transmit with lower power in the case of lower D2D density. 

In Figure \ref{StopPoints_vs_H}, we show the minimum number of stop points as a function of the UAV altitude for $\lambda_d=10^{-4}$. Figure  \ref{StopPoints_vs_H} shows that, for some values of $h$ which correspond to the optimal UAV altitude, the minimum number of stop points is minimized. For example, the range of optimal $h$ for $\epsilon=0.4$ and $\epsilon=0.6$ is, respectively, $400\,\textnormal{m}< h<500\,\textnormal{m}$ and $300\,\textnormal{m}<h<350\,\textnormal{m}$. As expected, the minimum number of stop points is lower for the lower value of  $\epsilon$.  


\begin{figure}[!t]
  \begin{center}
   \vspace{-0.2cm}
    \includegraphics[width=9cm]{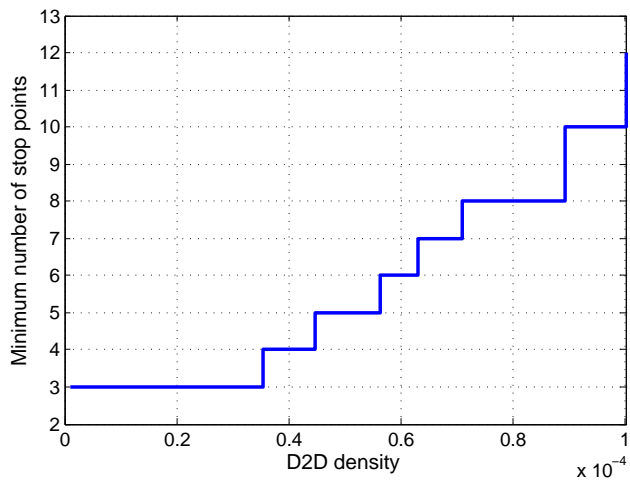}
    \vspace{-0.1cm}
    \caption{ Number of stop points vs.  D2D density.}  
    \label{Stop_points}
  \end{center}\vspace{-0.2cm}
\end{figure}

\begin{figure}[!t]
  \begin{center}
   \vspace{-0.2cm}
    \includegraphics[width=9cm]{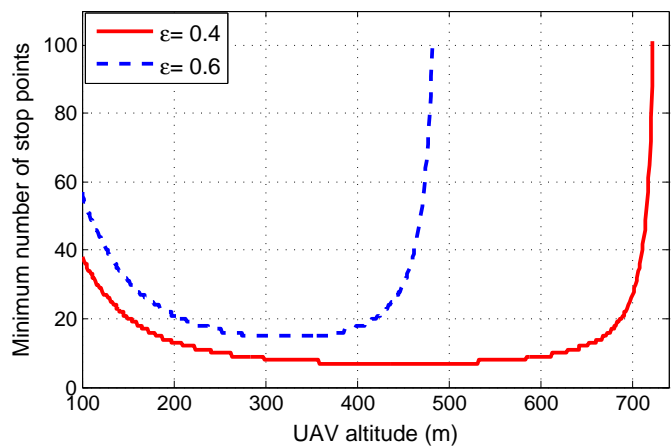}
    \vspace{-0.1cm}
    \caption{ Minimum number of stop points vs. UAV altitude.}
    \label{StopPoints_vs_H}
  \end{center}\vspace{-0.2cm}
\end{figure}

\textcolor{black}{
Figure \ref{Coverage_delay} shows the tradeoff between the downlink coverage probability and the delay which is considered to be proportional to the number of stop points. In Figure \ref{Coverage_delay}, we can see that, in order to guarantee a higher coverage probability for DUs, the UAV should stop at more locations. As observed in this Figure, for ${\lambda_{d}=10^{-4}}$, to increase the DU coverage probability from 0.4 to 0.7, the number of stop points should increase from 5 to 23. For a higher number of stop points, the UAV is closer to the DUs and, thus, it has a higher chance of LoS. However, on the average, a DU should wait for a longer time to be covered by the UAV that reaches its vicinity. In addition, as the density of D2D users increases, the number of stop points (delay) increases especially when a high coverage probability for DUs must be satisfied. For instance, if ${\lambda_{d}}$ increases from $0.5\times10^{-4}$ to $10^{-4}$, or equivalently from $50$ to $100$ for the given area, the number of stop points should increase from 4 to 9 to satisfy a 0.5 DU coverage probability, and from 20 to 55 for a 0.8 coverage requirement.
}
\begin{figure}[!t]
  \begin{center}
   \vspace{-0.2cm}
    \includegraphics[width=9cm]{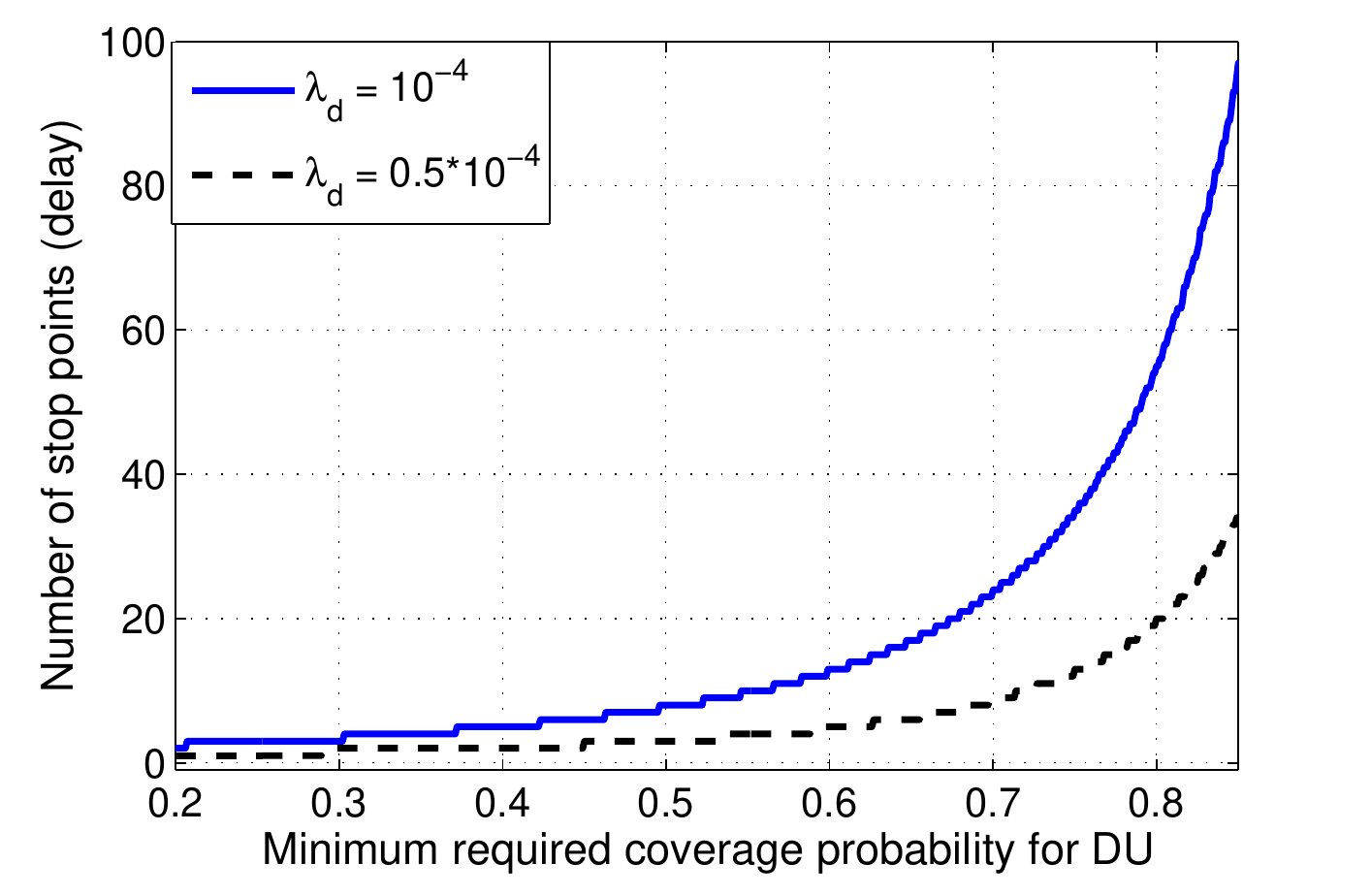}
    \vspace{-0.1cm}
    \caption{Minimum number of stop points vs. coverage probability (coverage-delay tradeoff)}
    \label{Coverage_delay} 
  \end{center}\vspace{-0.2cm}
\end{figure}

 \textcolor{black}{
Figure \ref{OutageProbability} shows the overall outage probability for D2D users versus the number of retransmissions. As the number of retransmissions (time slots) increases, the overall outage probability also increases. In other words, for higher number of time slots, the possibility that a failure happens during retransmissions, increases. Furthermore, since the UAV is an interference source for the D2D users, the higher number of stop points leads to a higher outage probability. From Figure \ref{OutageProbability}, we can see that, the increase in the outage probability of D2D users due to the UAV is 0.20 for $M=3$, and is 0.38 for $M=7$. Therefore, when the number of stop points increases due to the higher density of D2D users or a higher coverage requirement of the downlink users, the D2D communications are more prone to a failure.  
}

\begin{figure}[!t]
  \begin{center}
   \vspace{-0.2cm}
    \includegraphics[width=9cm]{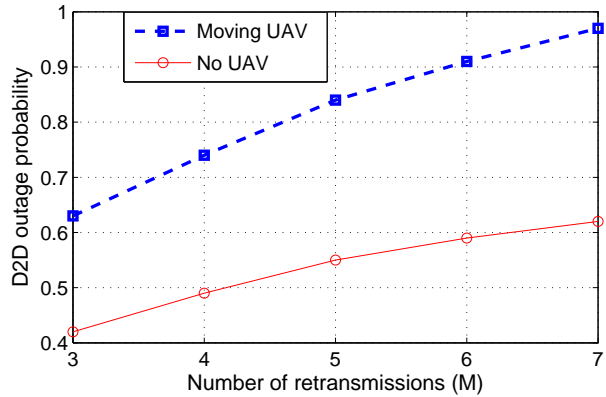}
    \vspace{-0.1cm}
    \caption{ \textcolor{black}{Overall D2D outage probability vs. number of retransmissions.}}  
    \label{OutageProbability}
  \end{center}\vspace{-0.2cm}
\end{figure}

\section{Conclusions}
\textcolor{black}{In this paper, we have studied the performance of a UAV that acts as a flying base station in an area in which users are engaged in D2D communication. We have considered two types of users: in the network: the downlink users served by the UAV and D2D users that communicate directly with one another. For both types, we have derived tractable expressions for the coverage probabilities as the main performance evaluation metrics. The results have shown that a maximum system sum-rate can be achieved if the UAV altitude is appropriately adjusted based on the D2D users density. In the mobile UAV scenario, using the disk covering problem, the entire target area (cell) can be completely covered by the UAV in a shortest time with a minimum required transmit power. Moreover, in this case, we have derived the overall outage probability for D2D users, and showed that the outage probability increases as the number of stop point increases. Finally, we have analyzed the tradeoff between the coverage and the time required for covering the entire target area (delay) by the mobile UAV. The results show that, the number of stop points must be significantly increased as the minimum coverage requirement for DUs increases.\vspace{-0.6cm}
}
\appendix
\subsection{Proof of Theorem 1}\vspace{-0.3cm}
\begin{align}
{P_{{\mathop{\rm cov}} ,{{d}}}}(r,\varphi ,\beta ) &={\mathds {P}}\left[ {\gamma _d} \ge \beta \right]= \mathds {P}\left[ \frac{{{P_d}{d_0^{ - {\alpha _d}}}g}}{{I_d^c + {I_u} + N}} \ge \beta \right]\nonumber\\
&= \mathds {P}\left[ g \ge \frac{{\beta {d_0^{{\alpha _d}}}(I_d^c + {I_u} + N)}}{{{P_d}}}\right]\mathop  = \limits^{(a)}{\mathop{\mathds E_{{I_u},I_d^c}}}\left[ {\exp (\frac{{ - \beta {d_0^{{\alpha _d}}}(I_d^c + {I_u} + N)}}{{{P_d}}})} \right] \nonumber\\
&\mathop  = \limits^{(b)} {{\mathds {E}}_{{I_u}}}\left[ {\exp (\frac{{ - \beta {d_0^{{\alpha _d}}}{I_u}}}{{{P_d}}})} \right]{{\mathds {E}}_{I_d^c}}\left[ {\exp (\frac{{ - \beta {d_0^{{\alpha _d}}}I_d^c}}{{{P_d}}})} \right]\exp \left( {\frac{{ - \beta {d_0^{{\alpha _d}}}N}}{{{P_d}}}} \right),
\end{align}
where $g$ is an exponential random variable with a mean value of one (i.e.  $g\sim\text{exp}(1)$), (\textit a) follows from the exponential distribution of $g$ based on the Rayleigh fading assumption, and taking the expectation over ${I_u}$ and  ${I_d^c}$ (as random variables). Step (\textit b) comes from the fact that ${I_u}$ and ${I_d^c}$ are independent because the interference stems from different sources which are spatially uncorrelated.

Here, ${{\mathds {E}}_{{I_u}}}$ and ${{\mathds {E}}_{I_d^c}}$ are given by:

\begin{align} 
{{\mathds {E}}_{{I_u}}}\left[ {\exp (\frac{{ - \beta {d_0^{{\alpha _d}}}{I_u}}}{{{P_d}}})} \right] &= {{{P}}_{{\rm{LoS}}}}\exp \left( {\frac{{ - \beta {d_0^{{\alpha _d}}}{P_u}{{|X_u|}^{ - {\alpha _u}}}}}{{{P_d}}}} \right)\nonumber\\
&  + {{{P}}_{{\rm{NLoS}}}}\exp \left( {\frac{{ - \beta {d_0^{{\alpha _d}}}\eta {P_u}{|X_u|^{ - {\alpha _u}}}}}{{{P_d}}}} \right),
\end{align} 
 
\begin{align}
{{\mathds {E}}_{I_d^c}}\left[ {\exp (\frac{{ - \beta {d_0^{{\alpha _d}}}I_d^c}}{{{P_d}}})} \right]& = {{\mathds {E}}_{{d_{i}},{g_{i}}}}\left[ {\prod\limits_{i} {\exp (\frac{{ - \beta {d_0^{{\alpha _d}}}}}{{{P_d}}}{P_d}{d_{i}}^{ - {\alpha _d}}{g_{i}})} } \right]\mathop  = \limits^{(a)}  \exp \left( {\frac{{ - 2{\pi ^2}{\lambda _d}{\beta ^{2/{\alpha _d}}}{d_0^2}}}{{{\alpha _d}\sin (2\pi /{\alpha _d})}}} \right),
\end{align}
 \textcolor{black}{where the details of step (a) follow directly from the results in \cite{martin}.}

Finally, using (31), (32) and (33) Theorem 1 is proved. 
\subsection{Proof of Theorem 2}
The coverage probability for a cellular user located at $(r,\varphi )$ is written as: \vspace{0.1cm}
\begin{align}
{P_{{\mathop{\rm cov}} ,{du}}}(r,\varphi ,\beta ) = \mathds {P}\left[ {\gamma _u} \ge \beta \right]= {{{P}}_{{\rm{LoS}}}}(r)\mathds {P}\left[ \frac{{{P_u}{r^{ - {\alpha _u}}}}}{{{I_d} + N}} \ge \beta \right]  + {{{P}}_{{\rm{NLoS}}}}(r)\mathds {P}\left[ \frac{{\eta {P_u}{r^{ - {\alpha _u}}}}}{{{I_d} + N}} \ge \beta \right] \nonumber\\
 = {{{P}}_{{\rm{LoS}}}}(r)\mathds {P}\left[ {I_d} \le \frac{{{P_u}{r^{ - {\alpha _u}}} - \beta N}}{\beta }\right]  + {{{P}}_{{\rm{NLoS}}}}(r) \mathds {P}\left[ {I_d} \le \frac{{\eta {P_u}{r^{ - {\alpha _u}}} - \beta N}}{\beta }\right].
\end{align}

Note that, there is no closed-form expression for the  cumulative distribution function (CDF) of the interference from D2D users \cite{ganti} and \cite{weber}. Here, we provide lower and upper bounds for the CDF of interference.
First, we divide the interfering D2D transmitters into two subsets:
\begin{equation}
\left\{ \begin{array}{l}
{\Phi _1} = \{ {\Phi _{\rm{B}}}|{P_d}{d_{i}}^{ - {\alpha _d}}{g_{i}} \ge T\}, \\
{\Phi _2} = \{ {\Phi _{\rm{B}}}|{P_d}{d_{i}}^{ - {\alpha _d}}{g_{i}} \le T\},
\end{array} \right.
\end{equation}
where $T$ is a threshold which is used to derive the CDF of the interference from D2D users.\\
Now, considering the interference power from D2D users located in ${\Phi _1}$  and ${\Phi _2}$  as ${I_{d,{\Phi _1}}}$ and ${I_{d,{\Phi _2}}}$, we have:
\begin{align}
\mathds {P}\left[ {I_d} \le T\right]  &= \mathds {P}\left[ {I_{d,{\Phi _1}}} + {I_{d,{\Phi _2}}} \le T\right]  \le \mathds {P}\left[ {I_{d,{\Phi _1}}} \le T\right]  = \mathds {P}\left[ {\Phi _1} = 0\right] \nonumber\\
 &= {\mathds {E}}\left[ {\prod\limits_{{\Phi _B}} {\mathds {P}({P_d}{d_{i}}^{ - {\alpha _d}}{g_{i}} < T)} } \right] = {\mathds {E}}\left[ {\prod\limits_{{\Phi _B}} {\mathds {P}({g_{i}} < \frac{{T{d_{i}}^{{\alpha _d}}}}{{{P_d}}})} } \right]\nonumber\\
& \mathop  = \limits^{(a)} {\mathds {P}}\left[ {\prod\limits_{{\Phi _B}} {1 - \text{exp}( - \frac{{T{d_{i}}^{{\alpha _d}}}}{{{P_d}}})} } \right] \mathop  = \limits^{(b)} \exp\left( { - \lambda _d \int\limits_0^\infty  {\text{exp}( - \frac{{T{r^{{\alpha _d}}}}}{{{P_d}}}){\rm{ }}r\text{d}r} } \right)\nonumber\\
& = \exp \left( { - \pi \lambda _d {{\left( {\frac{T}{{{P_d}}}} \right)}^{ - 2/{\alpha _d}}}\Gamma (1 + 2/{\alpha _d})} \right),
\end{align}
where ($a$) and ($b$) come from the Rayleigh fading assumption and PGFL of the PPP.

The upper bound is derived as follows:
\begin{align}
\mathds {P}\left[ {I_d} \le T\right]  &= 1 - \mathds {P}\left[ {I_d} \ge T\right]  \nonumber\\
&=1 - \Big( {\mathds {P}\left[ {I_d} \ge T|{I_{d,{\Phi _1}}} \ge T\right] \mathds {P}\left[ {I_{d,{\Phi _1}}} \ge T\right]  + \mathds {P}\left[ {I_d} \ge T|{I_{d,{\Phi _1}}} \le T\right] \mathds {P}\left[ {I_{d,{\Phi _1}}} \le T\right] } \Big)\nonumber\\
& = 1 -\Big( {\mathds {P}\left[ {I_{d,{\Phi _1}}} \ge T\right]  + \mathds {P}\left[ {I_d} \ge T|{I_{d,{\Phi _1}}} \le T\right]  \mathds {P}\left[ {I_{d,{\Phi _1}}} \le T\right] } \Big)\nonumber\\
& = 1 - \Big( {1 - \mathds {P}\left[ {\Phi _1} = 0\right]  + \mathds {P}\left[ {I_d} \ge T|{I_{d,{\Phi _1}}} \le T\right] \mathds {P}\left[ {\Phi _1} = 0\right] } \Big)\nonumber\\
& = \mathds {P}\left[ {\Phi _1} = 0\right] \Big( {1 - \mathds {P}\left[ {I_d} \ge T|{\Phi _1} = 0\right]} \Big).
 \end{align}
 
Also,
\begin{align}
\mathds {P}\left[ {I_d} \ge T|{\Phi _1} = 0\right]  &\mathop  \le \limits^{(a)} \frac{{{\mathds {E}}\left[ {I_d} \ge T|{\Phi _1} = 0\right] }}{T}= \frac{1}{T}{\mathds {E}}\left[ {\sum\limits_\Phi  {{P_d}{d_{i}}^{ - {\alpha _d}}{g_{i}}\mathds{1}{{({P_d}{d_{i}}^{ - {\alpha _d}}{g_{i}} \le T)}}} } \right]\nonumber\\
 &= \frac{1}{T}{{\mathds {E}}_{{d_{i}}}}\left[ {\sum\limits_\Phi  {{P_d}{d_{i}}^{ - {\alpha _d}}{{\mathds {E}}_{{g_{i}}}}\left[ {{g_{i}}\mathds{1}({g_{i}} \le \frac{{T{d_{i}}^{{\alpha _d}}}}{{{P_d}}})} \right]} } \right]\nonumber\\
&= \frac{1}{T}{{\mathds {E}}_{{d_{i}}}}\left[ {\sum\limits_\Phi  {{P_d}{d_{i}}^{ - {\alpha _d}}\left[ {\int\limits_0^{{\textstyle{{T{d_{i}}^{{\alpha _d}}} \over {{P_d}}}}} {g{e^{ - g}}\text{d}g} } \right]} } \right] \nonumber\\
 & =\frac{{2\pi {P_d}\lambda_d }}{T}\int\limits_0^\infty  {{r^{ - {\alpha _d}}}} \left( {\int\limits_0^{{\textstyle{{T{r^{{\alpha _d}}}} \over {{P_d}}}}} {g{e^{ - g}}\text{d}g{\rm{ }}} } \right)r\text{d}r\nonumber\\
 &= \frac{{2\pi \lambda_d {\rm{ }}\Gamma (1 + 2/{\alpha _d})}}{{{\alpha _d} - 2}}{\left( {\frac{T}{{{P_d}}}} \right)^{ - 2/{\alpha _d}}},
\end{align}
where ($a$) is based on the Markov's inequality which is stated as follows: for any non-negative integrable random variable $X$ and positive $L$,  $P(X \ge L) \le \frac{{\mathds {E}\left[ X \right]}}{L}$.
Also, $\mathds{1}(.)$ is the indicator function which can only be equal to 1 or 0.
Hence, the lower ($L_I$) and upper (${U_I}$) bounds for the CDF of interference  become:
\begin{equation}
\begin{small}
{L_I}(T) = \left( {1 - \frac{{2\pi { \lambda_d} {\rm{ }}\Gamma (1 + 2/{\alpha _d})}}{{{\alpha _d} - 2}}{{\left( {\frac{T}{{{P_d}}}} \right)}^{ - 2/{\alpha _d}}}} \right)\exp\left( { - \pi{ \lambda_d} {{\left( {\frac{T}{{{P_d}}}} \right)}^{ - 2/{\alpha _d}}}\Gamma (1 + 2/{\alpha _d})} \right),
\end{small}
\end{equation}

\begin{equation}
{U_I}(T) =  \exp\left( { - \pi { \lambda_d} {{\left( {\frac{T}{{{P_d}}}} \right)}^{ - 2/{\alpha _d}}}\Gamma (1 + 2/{\alpha _d})} \right).
\end{equation}\\
Thus, we have ${L_I}(T) \le\mathds {P}\{ {I_d} \le T\}  \le {U_I}(T)$.

 \textcolor{black}{
Finally, considering (34), (39), and (40), the lower bound and upper bound of the average coverage probability for DUs in the cell is expressed as:
\begin{align}
{\bar P^L_{{\mathop{\rm cov}} ,\textnormal{du}}}(\beta ) & = \int\limits_0^{{R_c}} {{{{P}}_{{\rm{LoS}}}}(r)} {L_I}\left(\frac{{{P_u}{|X_u|}^{ - {\alpha _u}}}}{\beta } - N\right)\frac{{2r}}{{R_c^2}}{\rm{d}}r \nonumber\\ &+\int\limits_0^{{R_c}} {{{{P}}_{{\rm{NLoS}}}}(r)} {L_I}\left(\frac{{\eta {P_u}{|X_u|}^{ - {\alpha _u}}}}{\beta } - N\right)\frac{{2r}}{{R_c^2}}{\rm{d}}r,\\
{\bar P^U_{{\mathop{\rm cov}} ,\textnormal{du}}}(\beta ) &= \int\limits_0^{{R_c}} {{{{P}}_{{\rm{LoS}}}}(r)} {U_I}\left(\frac{{{P_u}{|X_u|}^{ - {\alpha _u}}}}{\beta } - N\right)\frac{{2r}}{{R_c^2}}{\rm{d}}r \nonumber\\ &+\int\limits_0^{{R_c}} {{{{P}}_{{\rm{NLoS}}}}(r)} {U_I}\left(\frac{{\eta {P_u}{|X_u|}^{ - {\alpha _u}}}}{\beta } - N\right)\frac{{2r}}{{R_c^2}}{\rm{d}}r,
\end{align}
and Theorem 2 is proved. 
}
 \textcolor{black}{
\subsection{Proof of Theorem 3}
Consider $\gamma_{d,i}$ and ${g_i}$, respectively, the SINR and the channel gain (with exponential distribution) at $i^{th}$ retransmission, for $ 1\le i\le M$. The outage probability is the probability of having at least one failure during $M$ retransmissions. Then, we have:
\begin{align}
{P_{out,d}} =& 1 - \mathds {P}\left[ {{\gamma _{d,1}} \ge \beta ,...,{\gamma _{d,M}} \ge \beta } \right] \nonumber  \\
=& 1 - \mathds {P}\left[ {\frac{{{P_d}d_0^{ - {\alpha _d}}{g_1}}}{{I_{d,1}^c + {I_{u,1}} + N}} \ge \beta ,...,\frac{{{P_d}d_0^{ - {\alpha _d}}{g_M}}}{{I_{d,M}^c + {I_{u,M}} + N}} \ge \beta } \right] \nonumber \\
=&1 - \mathds {P}\left[ {{g_1} \ge \frac{{d_0^{{\alpha _d}}\beta (I_{d,1}^c + {I_{u,1}} + N)}}{{{P_d}}},...,{g_M} \ge \frac{{d_0^{{\alpha _d}}\beta (I_{d,M}^c + {I_{u,M}} + N)}}{{{P_d}}}} \right]  \nonumber \\
\mathop  = \limits^{(a)}&1 - \mathds {E}\left[ {\prod\limits_{i = 1}^M {\exp \left( {\frac{{ - d_0^{{\alpha _d}}\beta (I_{d,i}^c + {I_{u,i}} + N)}}{{{P_d}}}} \right)} } \right] \nonumber \\
\mathop  = \limits^{(b)}& 1 - \mathds {E}\left[ {\prod\limits_{i = 1}^M {\exp \left( {\frac{{ - d_0^{{\alpha _d}}\beta I_{d,i}^c}}{{{P_d}}}} \right)} } \right]\mathds {E}\left[ {\prod\limits_{i = 1}^M {\exp \left( {\frac{{ - d_0^{{\alpha _d}}\beta {I_{u,i}}}}{{{P_d}}}} \right)}} \right]\exp \left( {\frac{{ - d_0^{{\alpha _d}}\beta MN}}{{{P_d}}}} \right),
\end{align}
where $(a)$ follows the assumption that the fading is independent in different retransmissions, and step $(b)$ comes from the fact that interference due to D2D users, interference from UAV, and noise are all independent.}  
 \textcolor{black}{
Also, 
\begin{align}
\mathds {E}\left[ {\prod\limits_{i = 1}^M {\exp \left( {\frac{{ - d_0^{{\alpha _d}}\beta I_{d,i}^c}}{{{P_d}}}} \right)} } \right] &= \mathds {E}\left[ {\exp \left( {\frac{{ - d_0^{{\alpha _d}}\beta \sum\limits_{i = 1}^M {I_{d,i}^c} }}{{{P_d}}}} \right)} \right]\nonumber \\
&\mathop  = \limits^{(c)} \exp \left( { - {\lambda _d}\int_{{R^2}} {\left[ {1 - {{\left( {\frac{1}{{1 + \frac{{\beta |x{|^{ - {\alpha _d}}}}}{{d_0^{- {\alpha _d}}}}}}} \right)}^M}} \right]{\rm{d}}x} } \right),
\end{align}
where details of $(c)$ can be found in \cite{martin} where the correlation between D2D interference in different retransmissions is taken into account. Finally, }

\begin{small}
\begin{align}
\prod\limits_{i = 1}^M {{\mathds {E}_{{I_{u,i}}}}\left[ {\exp \left( {\frac{{ - d_0^{{\alpha _d}}\beta {I_{u,i}}}}{{{P_d}}}} \right)} \right]}\mathop  = \limits^{(d)} \prod\limits_{i = 1}^M {\left[ {{P_{{\rm{LoS,}}i}}\exp \left( {\frac{{ - \beta d_0^{{\alpha _d}}{P_u}|{X_{u,i}}{|^{ - {\alpha _d}}}}}{{{P_d}}}} \right) + {P_{{\rm{NLoS,}}i}}\exp \left( {\frac{{ - \beta d_0^{{\alpha _d}}\eta {P_u}|{X_{u,i}}{|^{ - {\alpha _d}}}}}{{{P_d}}}} \right)} \right]}, 
\end{align}
\end{small}where step $(d)$ is based on the fact that the interference from the UAV can be treated as independent in different retransmissions.

Finally, using (40), (41), and (42), Theorem 3 is proved.\vspace{-0.4cm}

\def\baselinestretch{1.5}
\bibliographystyle{IEEEtran}
\bibliography{references}

\begin{thebibliography}{10}
\providecommand{\url}[1]{#1}
\csname url@samestyle\endcsname
\providecommand{\newblock}{\relax}
\providecommand{\bibinfo}[2]{#2}
\providecommand{\BIBentrySTDinterwordspacing}{\spaceskip=0pt\relax}
\providecommand{\BIBentryALTinterwordstretchfactor}{4}
\providecommand{\BIBentryALTinterwordspacing}{\spaceskip=\fontdimen2\font plus
\BIBentryALTinterwordstretchfactor\fontdimen3\font minus
  \fontdimen4\font\relax}
\providecommand{\BIBforeignlanguage}[2]{{%
\expandafter\ifx\csname l@#1\endcsname\relax
\typeout{** WARNING: IEEEtran.bst: No hyphenation pattern has been}%
\typeout{** loaded for the language `#1'. Using the pattern for}%
\typeout{** the default language instead.}%
\else
\language=\csname l@#1\endcsname
\fi
#2}}
\providecommand{\BIBdecl}{\relax}
\BIBdecl

\bibitem{Bucaille}
I.~Bucaille, S.~Hethuin, A.~Munari, R.~Hermenier, T.~Rasheed, and S.~Allsopp,
  ``Rapidly deployable network for tactical applications: Aerial base station
  with opportunistic links for unattended and temporary events absolute
  example,'' in \emph{Proc. of IEEE Military Communications Conference
  (MILCOM)}, San Diego, CA, USA, Nov. 2013.

\bibitem{zhan}
P.~Zhan, K.~Yu \emph{et~al.}, ``Wireless relay communications with unmanned
  aerial vehicles: Performance and optimization,'' \emph{IEEE Transactions on
  Aerospace and Electronic Systems,}, vol.~47, no.~3, pp. 2068--2085, July.
  2011.

\bibitem{lien}
S.-Y. Lien, K.-C. Chen, and Y.~Lin, ``Toward ubiquitous massive accesses in
  3gpp machine-to-machine communications,'' \emph{IEEECommunications Magazine},
  vol.~49, no.~4, pp. 66--74, April. 2011.

\bibitem{dhillon}
H.~S. Dhillon, H.~Huang, and H.~Viswanathan, ``Wide-area wireless communication
  challenges for the internet of things,'' \emph{available online:
  arxiv.org/abs/1504.03242.}, 2015.

\bibitem{HouraniModeling}
A.~Hourani, S.~Kandeepan, and A.~Jamalipour, ``Modeling air-to-ground path loss
  for low altitude platforms in urban environments,'' in \emph{Proc. of IEEE
  Global Telecommunications Conference (GLOBECOM)}, Austin, TX, USA, Dec. 2014.

\bibitem{FengModelling}
Q.~Feng, E.~K. Tameh, A.~R. Nix, and J.~McGeehan, ``Modelling the likelihood of
  line-of-sight for air-to-ground radio propagation in urban environments,'' in
  \emph{Proc. of IEEE Global Telecommunications Conference (GLOBECOM)}, San
  Diego, CA, USA, Nov. 2006.

\bibitem{FengPath}
Q.~Feng, J.~McGeehan, E.~K. Tameh, and A.~R. Nix, ``Path loss models for
  air-to-ground radio channels in urban environments,'' in \emph{Proc. of IEEE
  Vehicular Technology Conference (VTC)}, Melbourne, Vic, Australia, May 2006.

\bibitem{Holis}
J.~Holis and P.~Pechac, ``Elevation dependent shadowing model for mobile
  communications via high altitude platforms in built-up areas,'' \emph{IEEE
  Transactions on Antennas and Propagation}, vol.~56, no.~4, pp. 1078--1084,
  April 2008.

\bibitem{HouraniOptimal}
A.~Hourani, K.~Sithamparanathan, and S.~Lardner, ``Optimal {LAP} altitude for
  maximum coverage,'' \emph{IEEE Wireless Communication Letters}, vol.~3,
  no.~6, pp. 569--572, Dec. 2014.

\bibitem{Mozaffari}
M.~Mozaffari, W.~Saad, M.~Bennis, and M.~Debbah, ``Drone small cells in the
  clouds: Design, deployment and performance analysis,'' in \emph{Proc. of IEEE
  Global Communications Conference (GLOBECOM)}, San Diego, CA, USA, Dec. 2015.

\bibitem{kosmerl}
J.~Kosmerl and A.~Vilhar, ``Base stations placement optimization in wireless
  networks for emergency communications,'' in \emph{Proc. of IEEE International
  Conference on Communications (ICC)}, Sydney, Australia, June. 2014.

\bibitem{Daniel}
K.~Daniel and C.~Wietfeld, ``Using public network infrastructures for {UAV}
  remote sensing in civilian security operations,'' DTIC Document, Tech. Rep.,
  Mar. 2011.

\bibitem{Rohde}
S.~Rohde and C.~Wietfeld, ``Interference aware positioning of aerial relays for
  cell overload and outage compensation,'' in \emph{Proc. of IEEE Vehicular
  Technology Conference (VTC)}, Quebec, QC, Canada, Sept. 2012.

\bibitem{Han}
Z.~Han, A.~L. Swindlehurst, and K.~Liu, ``Optimization of {MANET} connectivity
  via smart deployment/movement of unmanned air vehicles,'' \emph{IEEE
  Transactions on Vehicular Technology}, vol.~58, no.~7, pp. 3533--3546, Dec.
  2009.

\bibitem{Jiang}
F.~Jiang and A.~L. Swindlehurst, ``Optimization of {UAV} heading for the
  ground-to-air uplink,'' \emph{IEEE Journal on Selected Areas in
  Communications}, vol.~30, no.~5, pp. 993--1005, June 2012.

\bibitem{Mozaffari2}
M.~Mozaffari, W.~Saad, M.~Bennis, and M.~Debbah, ``Optimal transport theory for
  power-efficient deployment of unmanned aerial vehicles,'' in \emph{Proc. of
  IEEE International Conference on Communications (ICC)}, Kuala Lumpur,
  Malaysia, May. 2016.

\bibitem{yaacoub}
E.~Yaacoub and O.~Kubbar, ``Energy-efficient device-to-device communications in
  {LTE} public safety networks,'' in \emph{Proc. of IEEE Global
  Telecommunications Conference (GLOBECOM), Workshop on Green Interntet of
  Things}, Anaheim, CA, USA, Dec. 2012.

\bibitem{doppler}
K.~Doppler, M.~Rinne, C.~Wijting, C.~B. Ribeiro, and K.~Hugl,
  ``Device-to-device communication as an underlay to {LTE}-advanced networks,''
  \emph{IEEE Communication Magazine}, vol.~47, no.~12, pp. 42--49, Dec. 2009.

\bibitem{lee}
N.~Lee, X.~Lin, J.~G. Andrews, and R.~Heath, ``Power control for {D2D}
  underlaid cellular networks: Modeling, algorithms, and analysis,'' \emph{IEEE
  Journal on Selected Areas in Communications}, vol.~33, no.~1, pp. 1--13, Feb.
  2015.

\bibitem{shalmashi}
S.~Shalmashi, E.~Bj{\"o}rnson, M.~Kountouris, K.~W. Sung, and M.~Debbah,
  ``Energy efficiency and sum rate tradeoffs for massive {MIMO} systems with
  underlaid device-to-device communications,'' \emph{available online:
  arxiv.org/abs/1506.00598.}, 2015.

\bibitem{lin}
X.~Lin, R.~Heath, and J.~Andrews, ``The interplay between massive {MIMO} and
  underlaid {D2D} networking,'' \emph{IEEE Transactions on Wireless
  Communications,}, June. 2015.

\bibitem{haenggi}
M.~Haenggi, \emph{Stochastic geometry for wireless networks}.\hskip 1em plus
  0.5em minus 0.4em\relax Cambridge University Press, 2012.

\bibitem{martin}
M.~Haenggi and R.~K. Ganti, \emph{Interference in large wireless
  networks}.\hskip 1em plus 0.5em minus 0.4em\relax Foundations and Trends in
  Networking, 2009.

\bibitem{baccelli}
F.~Baccelli and B.~Blaszczyszyn, ``Stochastic geometry and wireless networks,
  volume ii-applications,'' 2009.

\bibitem{Afshang}
M.~Afshang, H.~S. Dhillon, and P.~H.~J. Chong, ``Modeling and performance
  analysis of clustered device-to-device networks,'' \emph{available online:
  arxiv.org/abs/:1508.02668}, 2015.

\bibitem{artin}
E.~Artin, \emph{The gamma function}.\hskip 1em plus 0.5em minus 0.4em\relax
  Courier Dover Publications, 2015.

\bibitem{kershner}
R.~Kershner, ``The number of circles covering a set,'' \emph{American Journal
  of mathematics}, pp. 665--671, 1939.

\bibitem{toth}
G.~F. T{\'o}th, ``Thinnest covering of a circle by eight, nine, or ten
  congruent circles,'' \emph{Combinatorial and computational geometry},
  vol.~52, no. 361, p.~59, 2005.

\bibitem{ganti}
R.~K. Ganti, ``A stochastic geometry approach to the interference and outage
  characterization of large wireless networks,'' Ph.D. dissertation, University
  of Notre Dame, 2009.

\bibitem{weber}
S.~P. Weber, X.~Yang, J.~G. Andrews, and G.~De~Veciana, ``Transmission capacity
  of wireless ad hoc networks with outage constraints,'' \emph{IEEE
  Transactions on Information Theory,}, vol.~51, no.~12, pp. 4091--4102, Nov.
  2005.

\end{thebibliography}

\end{document}